\newtheorem{theorem}{Theorem}
\newtheorem{lemma}{Lemma}
\def\II{{\mathbb I}}
\def\sign{{\rm sign}}
\def\Kcal{\mathcal K}
\def\RR{\mathbb{R}}
\def\PP{\mathbb{P}}
\def\Rcal{\mathcal R}
\def\EE{\mathbb{E}}
\def\Ncal{\mathcal N}
\def\Scal{\mathcal S}
\def\NN{\mathbb N}
\def\Tcal{\mathcal T}
\def\Ical{\mathcal I}
\begin{document}
%
% paper title
% Titles are generally capitalized except for words such as a, an, and, as,
% at, but, by, for, in, nor, of, on, or, the, to and up, which are usually
% not capitalized unless they are the first or last word of the title.
% Linebreaks \\ can be used within to get better formatting as desired.
% Do not put math or special symbols in the title.
\title{Automated and Robust Quantification of Colocalization in Dual-Color Fluorescence Microscopy: A Nonparametric Statistical Approach}
%
%
% author names and IEEE memberships
% note positions of commas and nonbreaking spaces ( ~ ) LaTeX will not break
% a structure at a ~ so this keeps an author's name from being broken across
% two lines.
% use \thanks{} to gain access to the first footnote area
% a separate \thanks must be used for each paragraph as LaTeX2e's \thanks
% was not built to handle multiple paragraphs
%

\author{Shulei~Wang$^{\ast,\dag}$,
        Ellen~T.~Arena$^{\dag,\ddag}$,
        Kevin~W.~Eliceiri$^{\dag,\ddag}$,
        and~Ming~Yuan$^{\ast,\dag,\ddag,\natural}$ % <-this % stops a space
\thanks{$^\ast$Columbia University}
\thanks{$^\dag$University of Wisconsin-Madison}
\thanks{$^\ddag$Morgridge Institute for Research}
\thanks{$^\natural$Address for Correspondence: Department of Statistics, Columbia University, 1255 Amsterdam Avenue, New York, NY 10027.}}

% note the % following the last \IEEEmembership and also \thanks - 
% these prevent an unwanted space from occurring between the last author name
% and the end of the author line. i.e., if you had this:
% 
% \author{....lastname \thanks{...} \thanks{...} }
%                     ^------------^------------^----Do not want these spaces!
%
% a space would be appended to the last name and could cause every name on that
% line to be shifted left slightly. This is one of those "LaTeX things". For
% instance, "\textbf{A} \textbf{B}" will typeset as "A B" not "AB". To get
% "AB" then you have to do: "\textbf{A}\textbf{B}"
% \thanks is no different in this regard, so shield the last } of each \thanks
% that ends a line with a % and do not let a space in before the next \thanks.
% Spaces after \IEEEmembership other than the last one are OK (and needed) as
% you are supposed to have spaces between the names. For what it is worth,
% this is a minor point as most people would not even notice if the said evil
% space somehow managed to creep in.

% The paper headers
\markboth{IEEE TRANSACTIONS ON IMAGE PROCESSING,~Vol.~XX, No.~XX, XXXX~XXXX}%
{Wang \MakeLowercase{\textit{et al.}}: Automated and Robust Quantification of Colocalization in Dual-Color Fluorescence Microscopy}
% The only time the second header will appear is for the odd numbered pages
% after the title page when using the twoside option.
% 
% *** Note that you probably will NOT want to include the author's ***
% *** name in the headers of peer review papers.                   ***
% You can use \ifCLASSOPTIONpeerreview for conditional compilation here if
% you desire.

% If you want to put a publisher's ID mark on the page you can do it like
% this:
%\IEEEpubid{0000--0000/00\$00.00~\copyright~2015 IEEE}
% Remember, if you use this you must call \IEEEpubidadjcol in the second
% column for its text to clear the IEEEpubid mark.

% use for special paper notices
%\IEEEspecialpapernotice{(Invited Paper)}

% make the title area
\maketitle

% As a general rule, do not put math, special symbols or citations
% in the abstract or keywords.
\begin{abstract}
Colocalization is a powerful tool to study the interactions between fluorescently labeled molecules in biological fluorescence microscopy. However, existing techniques for colocalization analysis have not undergone continued development especially in regards to robust statistical support. In this paper, we examine two of the most popular quantification techniques for colocalization and argue that they could be improved upon using ideas from nonparametric statistics {  and scan statistics}. In particular, we propose a new colocalization metric that is robust, easily implementable, and optimal in a rigorous statistical testing framework. Application to several benchmark datasets, as well as biological examples, further demonstrates the usefulness of the proposed technique.
\end{abstract}

% Note that keywords are not normally used for peerreview papers.
\begin{IEEEkeywords}
colocalization, fluorescence microscopy, hypothesis testing, nonparametric statistics, scan statistics.
\end{IEEEkeywords}

% For peer review papers, you can put extra information on the cover
% page as needed:
% \ifCLASSOPTIONpeerreview
% \begin{center} \bfseries EDICS Category: 3-BBND \end{center}
% \fi
%
% For peerreview papers, this IEEEtran command inserts a page break and
% creates the second title. It will be ignored for other modes.
\IEEEpeerreviewmaketitle

\section{Introduction}
% The very first letter is a 2 line initial drop letter followed
% by the rest of the first word in caps.
% 
% form to use if the first word consists of a single letter:
% \IEEEPARstart{A}{demo} file is ....
% 
% form to use if you need the single drop letter followed by
% normal text (unknown if ever used by the IEEE):
% \IEEEPARstart{A}{}demo file is ....
% 
% Some journals put the first two words in caps:
% \IEEEPARstart{T}{his demo} file is ....
% 
% Here we have the typical use of a "T" for an initial drop letter
% and "HIS" in caps to complete the first word.

\IEEEPARstart{C}{olocalization} is a powerful tool in examining macromolecules' spatial relationships to other macromolecules and cellular features. The goal of colocalization is to quantify the co-occurrence and/or correlation between two fluorescently-labeled molecules. Colocalization via fluorescence microscopy can yield quantitative, correlative spatiotemporal information. Yet historically, it has been often conducted in a rather ad hoc fashion, primarily through visual inspection of the overlaid microscopic images for both fluorescent signals; when two molecules of interest are labeled in ``red" and ``green", colocalization between them can be identified as ``yellow" in an overlaid image. As such, colocalization studies can be subject to misinterpretation and inconsistencies \citep[see, e.g.,][]{bolte2006guided,comeau2006guide}. To address this concern, numerous approaches have been proposed moving colocalization towards more rigorous and robust quantification \citep[see, e.g.,][among many others]{manders1992dynamics, manders1993measurement, costes2004automatic, ICCSN, dunn2011}.

Arguably the most widely-used quantitative measures for colocalization are Pearson's correlation coefficient and Manders' split coefficients. Pearson's correlation coefficient was first introduced to the microscopy community by \cite{manders1992dynamics}.  It measures the {\it linear relationship} of the intensities between the two channels, and a strong correlation indicates that a large intensity in one channel is often associated with a large intensity in the other. Another popular colocalization measure is the Manders' split coefficients proposed by \cite{manders1993measurement}.  These coefficients measure fractions of signal in one channel that overlap with the other.

Pearson's correlation coefficient and Manders' split coefficients measure the degree of colocalization manifested in two distinct ways: correlation and co-occurence, respectively. The former is most appropriate if two probes co-distribute proportionally to each other; whereas the latter is most useful if simple spatial overlap between the two probes is expected. We argue that both can be characterized as metrics of specific types of {\it positive dependence}. In statistical jargon, both Pearson's correlation coefficient and Manders' split coefficients are {\it parametric} in nature, which means that they work best when specific modeling assumptions hold; for example, Pearson's correlation works when the relationship between channels is linear. However, given the complexities that exist within biological contexts when measuring colocalization, this motivates us to consider a more robust method to quantify more general positive dependencies between two probes. {  To this end, we cast the colocalization analysis as a nonparametric statistical testing problem.} The approach we introduce {  for the testing problem} here is {  naturally} {\it nonparametric}, which works under much more general circumstances, as colocalization may display other types of associations beyond correlation or co-occurrence and may not be captured effectively by these two classical methods. {  The idea of nonparametric correlation coefficient in colocalization analysis has previously been introduced, e.g. \cite{adler2008, french2008}; however our work is the first to conduct colocalization analysis in a fashion of rigorous nonparametric statistical testing so that false discovery can be better controled and the value of coefficients can be transformed into statistical significance for easier interpretation.}

Not only is the biology itself adding complexity to colocalization analyses, but added complications are introduced during the acquisition process of biological samples, including varying background levels, leading to the need for extensive pre-processing before colocalization analyses can be applied.  When applying either Pearson's correlation coefficient or Manders' split coefficients to dual-channel fluorescence microscopic images directly, one might ignore an important fact that a dark background with positive offset may occupy a substantial area of the image. The power of either method critically depends on one's ability to determine an appropriate background level. Oftentimes, the solution is to avoid or exclude background pixels through manual selection a region of interest \citep{zinchuk2011quantifying,dunn2011}. More principled approaches have also been considered. In particular, global threshold reduction \citep[see, e.g.,][]{costes2004automatic} and local median threshold reduction \citep[see, e.g.,][]{ICCSN,zinchuk2011quantifying,dunn2011} have been widely used. In general, determining the background is a complex process and very susceptible to misspecification, as well as a lack of reproducibility.  There is a need for more robust colocalization analyses that can tease out the hidden, true biology without the need for user-based interference and manipulation via these pre-processing steps. The approach we developed here automatically adjusts for background, and therefore addresses this challenge in a seamless fashion.

In this paper, we discuss the main ideas behind different quantification techniques of colocalization and introduce our approach as a more general and robust alternative to those most frequently used. We provide a more rigorous justification of the proposed approach and show that the proposed colocalization score yields optimal test of colocalization under mild regularity conditions. Numerical experiments, both simulated and real, are also presented  to further demonstrate the merits of our proposed method.

\section{Robust Quantification of Colocalization}

To emphasize the need for a more robust quantification of colocalization, we first note that the usefulness of either Pearson's correlation coefficient or Manders' split coefficients relies on certain parametric assumptions about the data, albeit implicitly. 
Let $\II$ be the index set for all pixels in an image or a region of interest and, denoted by the pair $(X_i,Y_i)$, the intensity of the two channels measured at pixel $i\in \II$. Then the Pearson's correlation coefficient between two channels is given by
\begin{equation}
r={\sum_i(X_i-\bar{X})(Y_i-\bar{Y})\over \sqrt{\sum_i(X_i-\bar{X})^2\sum_i(Y_i-\bar{Y})^2}},
\end{equation}
where $\bar{X}$ and $\bar{Y}$ are the average intensities of the two channels, respectively. As mentioned previously, Pearson's correlation only measures the linear relationship of the intensities between two channels, and therefore may not be able to capture colocalization to its full extent. Consider a simple example where intensities $(X,Y)$ from the two channels can be modeled as a bivariate log-normal distribution. More concretely, $(\log(X),\log (Y))$ follows a bivariate normal distribution with mean $0$, variance $3^2=9$, and correlation coefficient $\rho$ so that the average intensity for each channel is approximately 90. The left panel of Figure \ref{fig:pcc-diff} gives the (population) Pearson's correlation coefficient between $X$ and $Y$ as a function of $\rho$ (i.e. the Pearson's correlation between $\log(X)$ and $\log (Y)$) and clearly shows that even very strong linear relationships on the log-scale may result in only modest Pearson correlation coefficients. In other words, Pearson's correlation is heavily influenced by nonlinear transformation on each channel. To further demonstrate this potential deficiency, $4\times 4$ images in two channels are given in the two right panels of Figure \ref{fig:pcc-diff}, whose intensities were generated from lognormal distribution with $\rho=0.9$. Despite the apparent colocalization between the two channels, both in terms of $\rho$ and visually, the Pearson's correlation coefficient is a mere 56\%.

\begin{figure}[htbp]
\centering
\includegraphics[width=3.7cm]{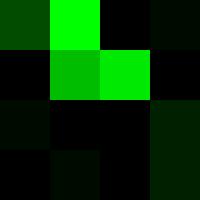}
\includegraphics[width=3.7cm]{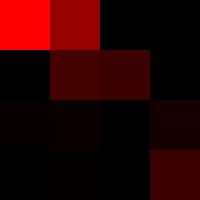}
\includegraphics[width=5cm]{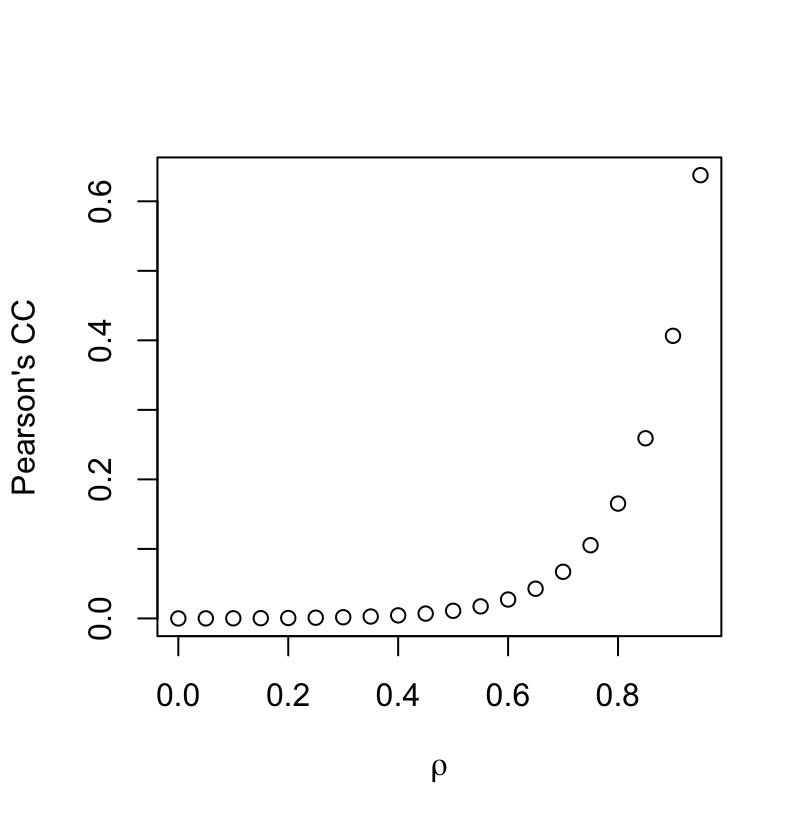}
\caption{Pearson's correlation coefficient is limited in that it only measures linear relationships between two channels. The lower panel shows the (population) correlation coefficient as a function of the correlation coefficient on the log-scale. The two upper panels  show sample images which, on the log-scale, have a correlation coefficient of 90\%; yet, on the original scale, it is only 56\%.}
\label{fig:pcc-diff}
\end{figure}

There are similar deficiencies for Manders' split coefficients as well. Specifically, Manders' split coefficients are defined by
$$
M_1={\sum_{i: Y_i>\alpha_Y} X_i\over \sum_{i\in \II} X_i}\qquad{\rm and}\qquad M_2={\sum_{i: X_i>\alpha_X} Y_i\over \sum_{i\in \II} Y_i},
$$
where the two thresholds $\alpha_X$ and $\alpha_Y$ are chosen appropriately so that any intensities below their respective threshold can be deemed as ``background''.
%the effectiveness of $M_1$ and $M_2$ in measuring colocalization depends critically on the choice of the thresholds $\alpha_X$ and $\alpha_Y$. They are often chosen subjectively, leading to ambiguity and inconsistency. A particularly common choice is advocated by \cite{costes2004automatic} where we set $\alpha_X$ and $\alpha_Y$ such that the pixels that are characterized as background, that is $X_i<\alpha_X$ and $Y_i<\alpha_Y$, have intensities with non-positive Pearson's correlation coefficient. Clearly, in the presence of noise, this may not perfectly separate signal from background so that $M_1$ and $M_2$ do not honestly reflect the amount of colocalization. 
It is worth noting that $M_1$ and $M_2$ can also be viewed as measures of the linear relationship between $X_i$s and ${\bf 1}(Y_i>\alpha_Y)$s, and $Y_i$s and ${\bf 1}(X_i>\alpha_X)$s, respectively, where ${\bf 1}(\cdot)$ is the indicator function. In other words, despite their differences in appearance, both Pearson's correlation and Manders' split coefficients can be viewed as measures for linear relationships between $X_i$s and $Y_i$s or their specific {\it monotonic transformations}. Motivated by this observation, we can consider a more general metric for dependence between $X_i$s and $Y_i$s under {\it arbitrary} monotonic transformations; more specifically, we have opted to quantify colocalization by Kendall's tau.

Let $n=|\II|$, the cardinality of $\II$. We call a pair of observations $(X_i, Y_i)$ and $(X_j, Y_j)$ ($i\neq j$) concordant if $\sign(X_i-X_j)\sign(Y_i-Y_j)>0$ and discordant if otherwise. Kendall tau for $\{(X_i,Y_i): i\in \II\}$ is then defined as the difference between the number of concordant pairs and discordant pairs divided by the total number of pairs, that is,
$$
\tau={1\over n(n-1)}\sum_{i\neq j}\sign(X_i-X_j)\sign(Y_i-Y_j).
$$
It is clear that $\tau$ depends on the data $\{(X_i,Y_i): i\in \II\}$ only through their ranks among $X_i$s and $Y_i$s so that it is invariant with respect to any monotonic transformations of $X_i$s and $Y_i$s.

As any other metric, when using $\tau$ to measure the degree of colocalization, it is essential to correct for background, and it may be fruitless to assess colocalization at locations where both channels are void of any real signal. To this end, it is of interest to evaluate Kendall tau only on the subset of pixels where both channels are sufficiently bright, leading to 
$$
\tau(t_X,t_Y)={\sum_{i,j\in \Kcal(t_X,t_Y):i\neq j}\sign(X_i-X_j)\sign(Y_i-Y_j)\over n_{t_X,t_Y}(n_{t_X,t_Y}-1)},
$$
where
$$
\Kcal(t_X,t_Y)=\left\{i\in \II:X_i\ge t_X,Y_i\ge t_Y\right\} 
$$
and
$$
n_{t_X,t_Y}=|\Kcal(t_X,t_Y)|,
$$
for two pre-specified $t_X$ and $t_Y$. We shall also adopt the convention that $\tau(t_X,t_Y)=-\infty$ if $n_{t_X,t_Y}\le 1$. {  }

Obviously, in practice, we do not know at which level $t_X$ and $t_Y$ colocalization may occur. To overcome this problem, we consider instead the maximum of normalized Kendall tau correlation for all possible $t_X$s and $t_Y$s. Note that the variance of $\tau(t_X,t_Y)$, when $X$ is independent from $Y$, is
$$
2(2n_{t_X,t_Y}+5)/9n_{t_X,t_Y}(n_{t_X,t_Y}-1).
$$
We shall therefore consider the following metric for colocalization
$$
\tau^\ast:=\max_{\substack{t_X\ge X_{(\lfloor n/2\rfloor)},\\ t_Y\ge Y_{(\lfloor n/2\rfloor)}}} \left\{\tau(t_X,t_Y)\cdot \sqrt{9n_{t_X,t_Y}(n_{t_X,t_Y}-1)\over 2(2n_{t_X,t_Y}+5)}\right\},
$$
where $X_{(k)}$ and $Y_{(k)}$ are the $k$th order statistics of $X_i$s and $Y_i$s respectively. Note that the lower bounds $X_{(\lfloor n/2\rfloor)}$ and $Y_{(\lfloor n/2\rfloor)}$ are chosen for convenience and can be replaced by other values. In particular, they can be taken as approximated thresholds of signal so that only possible thresholds above those approximated are considered.

The nonparametric version colocalization measure $\tau^\ast$ is more robust in at least two ways, compared to Pearson's correlation coefficient or Manders' split coefficients. $\tau^\ast$ is invariant with respect to arbitrary monotonic transformations of $X$ and $Y$. Furthermore, $\tau^\ast$ only takes real correlation on signal into account and is immune from the presence of background. To demonstrate the merit of $\tau^\ast$, we discuss the theoretical properties of $\tau^\ast$ under appropriate models in the next section.

{  {\it Remark:} The non-parametric correlation coefficient can reflect the general associations between variables in a more precise way, compared with parametric correlation coefficient \citep[see, e.g.][]{dengler2010,embrechts2002}. To illustrate this, we consider two examples to compare the Kendall tau correlation coefficient $\tau$ we use here and one of the most widely used correlation coefficient, Pearson correlation coefficient $r$. The first example is when $X$ and $Y$ are drawn from independent $t$-distributions with degrees of freedom less than 4. In this case, the variance of Pearson's correlation coefficient $r$ is not well defined \citep[see, e.g.][]{dengler2010}, so that $r$ might deviate from 0 with large probability. On the other hand, Kendall tau correlation converges to 0 as the sample size increases, as long as $X$ and $Y$ are independent, immune from any heavy tail distributions. In the second example \citep[see, e.g.][]{embrechts2002}, $X$ is drawn from a log-normal distribution and $Y=X^S$ for some integer $S$.  The Pearson's correlation coefficient between $X$ and $Y$ is 
$$
{e^S-1\over \sqrt{(e-1)(e^{S^2}-1)}}\to 0,\quad {\rm as}\ S\to \infty,
$$ 
despite the fact that $Y$ is totally determined by $X$. However, the Kendall tau correlation between $X$ and $Y$ is always 1, reflecting the strong connection between $X$ and $Y$. This suggests Kendall tau correlation $\tau$ is able to capture a wider range of association than Pearson's correlation coefficient $r$. Therefore, Kendall's tau $\tau$ can reflect correlation more precisely than Pearson's correlation coefficient $r$.}

%{  {\it Remark:} The principle of truncation and scanning above can also be applied, as a framework, to other colocalization measures. For example, the truncated Pearson's correlation coefficient is given accordingly
%$$
%r(t_X,t_Y)={\sum_{i\in \Kcal(t_X,t_Y)}(X_i-\bar{X})(Y_i-\bar{Y})\over \sqrt{\sum_{i\in \Kcal(t_X,t_Y)}(X_i-\bar{X})^2\sum_{i\in \Kcal(t_X,t_Y)}(Y_i-\bar{Y})^2}},
%$$
%where $\bar{X}$ and $\bar{Y}$ are the average intensities in region $\Kcal(t_X,t_Y)$ of the two channels. The maximum of normalized Pearson's correlation coefficient can be defined in a similar fashion with $\tau^\ast$:
%$$
%r^\ast:=\max_{\substack{t_X\ge X_{(\lfloor n/2\rfloor)},\\ t_Y\ge Y_{(\lfloor n/2\rfloor)}}} \left\{r(t_X,t_Y)\cdot \sqrt{n_{t_X,t_Y}}\right\}.
%$$
%Like $\tau^\ast$, $r^\ast$ is able to adjust for the background automatically as well.}

\section{Statistical Significance}
\label{sc:opt}

To translate the proposed metric for colocalization $\tau^\ast$ into statistical significance, we now consider a hypothesis testing framework for colocalization. To this end, $F$ denotes the joint distribution function for the pair $(X_i,Y_i)$, where $i=1,\ldots, n$. In the absence of colocalization (null hypothesis, $H_0$), the two channels can be expected to be behave independently so that
\begin{equation}
\label{eq:condnull}
H_0: F(x,y)=F_X(x)F_Y(y), \qquad \forall x,y\in \RR,
\end{equation}
where $F_X(x)=F(x,+\infty)$ and $F_Y(y)=F(+\infty,y)$ are the marginal distribution functions. On the other hand, in the presence of colocalization (alternative hypothesis, $H_1$), we expect that $X$ and $Y$ are positively dependent. Furthermore, positive dependency only applies to signals; that is, there exists some $\eta_X$ and $\eta_Y$ such that the conditional distribution of $(X,Y)$ given that $X>\eta_X$ and $Y>\eta_Y$, hereafter denoted by $F_{\eta_X,\eta_Y}$, is {\it positively quadrant dependent}. Specifically, if $F_{\eta_X,\eta_Y}$ is positively quadrant dependent, then $F_{\eta_X,\eta_Y}(x,y)\ge F_{\eta_X}(x)F_{\eta_Y}(y)$ for all $x,y\in \RR$, and there exist $x,y\in \RR$ such that $F_{\eta_X,\eta_Y}(x,y)> F_{\eta_X}(x)F_{\eta_Y}(y)$ where $F_{\eta_X}(x)=F_{\eta_X,\eta_Y}(x,+\infty)$ and $F_{\eta_Y}(y)=F_{\eta_X,\eta_Y}(+\infty,y)$ are the marginal distributions of $F_{\eta_X,\eta_Y}$ \citep[see, e.g.,][]{lehmann1966,nelsen2006}. We do not assume prior knowledge of $\eta_X$ and $\eta_Y$ so that
$$
H_1: \exists\ \eta_X, \eta_Y {\rm \ s.t.}\ F_{\eta_X,\eta_Y} {\rm \ is\ positively\ quadrant\ dependent.}
$$
The colocalization metric $\tau^\ast$ can be used to effectively test $H_0$ against $H_1$ and therefore can be converted into p-values as a scale-free measure of colocalization, which we will discuss in more detail in the next section.

\subsection{Optimality}
As previously stated, the colocalization metric $\tau^\ast$ provides an efficient statistic for testing $H_0$ against $H_1$. To this end, let $q_\alpha$ denote the $1-\alpha$ quantile of the distribution of $\tau^\ast$ under $H_0$. Although there is no closed-form analytic expression for $\tau^\ast$, it can be readily evaluated by Monte Carlo schemes. We shall discuss in further details practical issues of implementation in the next subsection. Once $q_\alpha$ is computed, we can then proceed to reject $H_0$ and therefore claim colocalization as soon as the observed $\tau^\ast$ is greater than $q_\alpha$. We denote this test by $\Delta$. It is clear that $\Delta$ is an $\alpha$ level test; we now argue that it is also optimal in the sense that it can detect evidence of colocalization at a level that no other tests could improve.

Note first that positive quadrant dependence of $F_{\eta_X,\eta_Y}$ immediately implies that for two independent copies $(X,Y)$ and $(\tilde{X},\tilde{Y})$ following distribution $F$,
\begin{align*}
&T(\eta_X,\eta_Y):=\\
&\PP\left\{(X-\tilde{X})(Y-\tilde{Y})>0|X,\tilde{X}>\eta_X; Y,\tilde{Y}>\eta_Y\right\}-\\
&\PP\left\{(X-\tilde{X})(Y-\tilde{Y})<0|X,\tilde{X}>\eta_X; Y,\tilde{Y}>\eta_Y\right\}>0.
\end{align*}
%\begin{eqnarray*}
%T(\eta_X,\eta_Y):=\EE\left\{{\rm sign}(X-\tilde{X}){\rm sign}(Y-\tilde{Y})|X,\tilde{X}>\eta_X; Y,\tilde{Y}>\eta_Y\right\}.
%\end{eqnarray*}
In other words, under null hypothesis $H_0$, $T(\eta_X,\eta_Y)=0$, for all $\eta_X,\eta_Y$; while under the alternative hypothesis $H_1$,
$$\sup_{\eta_X,\eta_Y}T(\eta_X,\eta_Y)>0.$$

\begin{theorem}
\label{thm:alter}
Assume that $\{(X_i,Y_i): i\in \II\}$ ($n:=|\II|$) are independently sampled from $F$ obeying
\begin{equation}
\label{eq:altercond}
\sup_{\eta_X,\eta_Y} V(\eta_X,\eta_Y)\cdot T^2(\eta_X,\eta_Y) \gg {\log\log n\over n}.
\end{equation}
Here $V(\eta_X,\eta_Y):=1+F(\eta_X,\eta_Y)-F_X(\eta_X)-F_Y(\eta_Y)$. Then $\Delta$ is a consistent test in that we reject $H_0$ in favor of $H_1$ with probability tending to one. Conversely, there exists a constant $c>0$ such that for any $\alpha$-level test $\Delta$ based on sample $\{(X_i,Y_i): i\in \II\}$, there is  an instance where joint distribution function $F$ obeying
\begin{equation}
\label{eq:lowercond}
\sup_{\eta_X,\eta_Y} V(\eta_X,\eta_Y)\cdot T^2(\eta_X,\eta_Y) \ge c{\log\log n\over n}
\end{equation}
and yet, we accept $H_0$ with probability tending to $1-\alpha$ as if $H_0$ holds.
\end{theorem}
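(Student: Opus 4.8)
The plan is to treat Theorem~\ref{thm:alter} as two matched statements---a consistency (upper) bound and a minimax (lower) bound---both governed by the law of the iterated logarithm (LIL), which is the source of the $\log\log n$ rate. The unifying heuristic is that for fixed thresholds $(t_X,t_Y)$ the standardized statistic $\tau(t_X,t_Y)\sqrt{9n_{t_X,t_Y}(n_{t_X,t_Y}-1)/[2(2n_{t_X,t_Y}+5)]}$ is, under $H_0$, an approximately standardized (degenerate) U-statistic with mean zero and unit variance, whereas under $H_1$ it concentrates around $\tfrac32 T(t_X,t_Y)\sqrt{n_{t_X,t_Y}}\approx \tfrac32 T(t_X,t_Y)\sqrt{n\,V(t_X,t_Y)}$. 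Since $\tau^\ast$ is the supremum of these quantities over the nested family of thresholds above the median, its null behavior is that of the maximum of a standardized U-process over nested subsamples, whose LIL rate is $\sqrt{2\log\log n}$.

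For the consistency direction, I would first show that $q_\alpha \lesssim \sqrt{\log\log n}$, i.e. the $1-\alpha$ null quantile of $\tau^\ast$ is at most of this order. This calls for a maximal inequality / functional-LIL argument for the standardized Kendall U-process indexed by thresholds; a plain union bound over the $O(n^2)$ threshold pairs would only yield $\sqrt{\log n}$, so one must exploit the strong correlation between nearby (nested) subsamples to recover the sharper $\log\log n$. Next, at a near-supremizing pair $(\eta_X,\eta_Y)$ for $\sup V\cdot T^2$, I would show $n_{\eta_X,\eta_Y}$ concentrates around $n\,V(\eta_X,\eta_Y)$ by Binomial concentration and, conditionally on the surviving indices, $\tau(\eta_X,\eta_Y)$ concentrates around $T(\eta_X,\eta_Y)$ via a Bernstein/Hoeffding inequality for U-statistics. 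Combining these, the standardized statistic at $(\eta_X,\eta_Y)$ is $\gtrsim T\sqrt{n\,V}\gtrsim\sqrt{\log\log n}$ under the separation~\eqref{eq:altercond}, whence $\tau^\ast > q_\alpha$ with probability tending to one and $\Delta$ rejects.

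For the converse, I would invoke Le Cam's method. Because $\tau^\ast$ is rank-based, I may assume uniform marginals, so that $H_0$ is the independence copula, and then construct a family of positively-quadrant-dependent alternatives, each localizing a perturbation in one upper quadrant $\{X>\eta_X,\ Y>\eta_Y\}$ at a prescribed scale $V$ with conditional strength $T$ calibrated so that $V\cdot T^2 = c\,\log\log n/n$. Placing a prior over a geometric grid of such scales produces a mixture $\bar P$, and the goal is to bound the chi-square divergence $\chi^2(\bar P^{\otimes n}\,\|\,P_0^{\otimes n})$ by a constant depending only on $c$; a bounded divergence forces any $\alpha$-level test to have power tending to $\alpha$, thereby exhibiting an instance $F$ satisfying~\eqref{eq:lowercond} that is asymptotically undetectable.

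I expect the main obstacles to be precisely the two places where the rate is $\log\log n$ rather than $\log n$. On the upper side it is the LIL-type maximal control of the standardized Kendall U-process over nested subsamples; the degenerate U-statistic structure under $H_0$ together with the randomness of $n_{t_X,t_Y}$ complicates the classical partial-sum LIL, so I would pass through a H\'ajek projection and a chaining/peeling argument over dyadic scales of subsample size. On the lower side it is the matching second-moment computation: the cross terms between perturbations at different scales must be shown to be \emph{nearly orthogonal} and the diagonal contribution must sum to $O(1)$ over the $\sim\log n$ scales, which is what pins down the critical constant $c$ and reproduces the $\log\log n$ boundary.
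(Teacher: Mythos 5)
Your plan follows essentially the same route as the paper: the null quantile is controlled at the $\sqrt{\log\log n}$ scale by a peeling/approximation argument over scales of subsample size (the paper uses explicit geometric grids of thresholds together with a McDiarmid-type continuity lemma for nested subsamples and Hoeffding's U-statistic bound, rather than a H\'ajek projection, but the mechanism is the same), the alternative is handled by binomial concentration of $n_{\eta_X,\eta_Y}$ plus U-statistic concentration of $\tau$ around $T$, and the lower bound is a Le Cam mixture over $\sim\log n$ geometrically nested upper-quadrant perturbations (a Farlie--Gumbel--Morgenstern family in the paper) with a second-moment computation exploiting near-orthogonality across scales. The one point to tighten is that to conclude acceptance with probability tending to exactly $1-\alpha$ you need the normalized second moment $\EE_0\bigl|M_n^{-1}\sum_j L_j-1\bigr|^2$ to tend to zero, not merely remain bounded, which the paper's calibration $\gamma_j(1-U_j)=2\sqrt{\log\log n}/\sqrt{n}$ with $M_n\asymp\log n$ delivers.
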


Hereafter, we write $a_n\gg b_n$ if $b_n=o(a_n)$. Theorem \ref{thm:alter} provides theoretical justifications that $\tau^\ast$ is an appropriate and powerful test statistic for $H_0$ against $H_1$. In particular, it suggests that $\tau^\ast$ is optimal in the sense that it can detect correlation at a level no other tests could significantly improve.

\subsection{Practical Considerations}
\label{sc:fast}

In practice, it is more useful to report the p-value associated with an observed $\tau^\ast$ rather than just a simple decision on rejecting or accepting the null hypothesis $H_0$. To this end, we can compare the observed $\tau^\ast$ from a dual-channel microscopic image with the sampling distribution of $\tau^\ast$ when there is no colocalization. We can apply a permutation test to estimate the sampling distribution of $\tau^\ast$ under $H_0$. More specifically, we can randomly shuffle $\{X_i: i\in \II\}$ or $\{Y_i: i\in \II\}$. Random arrangement ensures that there is no meaningful colocalization between the two channels. For each shuffled or permuted sample, we recompute $\tau^\ast$. The null distribution of $\tau^\ast$ can therefore be estimated by repeating the random rearrangement many times. 

When implementing this strategy, there are two practical challenges. The first potential hurdle is the computational cost. It is not hard to see that
\begin{equation}
\label{eq:full}
\tau^\ast=\max_{\substack{t_X=X_{(j)},\\ t_Y=Y_{(k)}:\\j,k\ge \lfloor n/2\rfloor}} \left\{\tau(t_X,t_Y)\cdot \sqrt{9n_{t_X,t_Y}(n_{t_X,t_Y}-1)\over 2(2n_{t_X,t_Y}+5)}\right\}.
\end{equation}
There are a total of $O(n^2)$ possible pairs of $(j,k)$, and fast evaluation of Kendall tau requires $O(n\log n)$ floating-point operations.
Thus, the exact computation of $\tau^\ast$ has complexity $O(n^3\log n)$. This could be quite expensive to compute for even a moderately-sized image, and particularly so because we need to compute $\tau^\ast$ for many scrambled images.

To this end, we propose to compute an approximation of $\tau^\ast$. More specifically, instead of evaluating the maximum over $O(n^2)$ possible pairs of $(j,k)$ as in \eqref{eq:full}, we consider the maximum over only a subset of these pairs. Let
\begin{align*}
\Rcal_n:=\left\{s:s=\left\lfloor n-\left(1+{1\over \log\log n}\right)^j\right\rfloor, j\in\mathbb{N}_{+}, s\ge \lfloor n/2\rfloor \right\}.
\end{align*}
Here $\mathbb{N}_{+}$ refers to the set of all positive integers. In other words, $\Rcal_n$ is a collection of coordinates that are nearly a geometric series. As such, the number of pairs in $\Rcal_n$ is much smaller than the original ones, as illustrated in Figure \ref{fig:fast}.
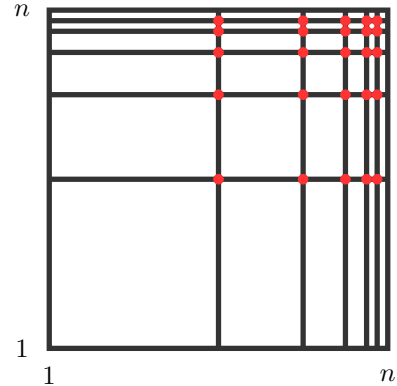
\begin{figure}[h]
\begin{center}
\begin{tikzpicture}[scale=0.9]
\draw[line width=2pt,black!80!white] (0,0) rectangle (5,5);
\path[draw,line width=2pt,black!80!white] (2.5,0) -- (2.5,5);
\path[draw,line width=2pt,black!80!white] (3.75,0) -- (3.75,5);
\path[draw,line width=2pt,black!80!white] (4.375,0) -- (4.375,5);
\path[draw,line width=2pt,black!80!white] (4.6875,0) -- (4.6875,5);
\path[draw,line width=2pt,black!80!white] (4.84375,0) -- (4.84375,5);
\path[draw,line width=2pt,black!80!white] (0,2.5) -- (5,2.5);
\path[draw,line width=2pt,black!80!white] (0,3.75) -- (5,3.75);
\path[draw,line width=2pt,black!80!white] (0,4.375) -- (5,4.375);
\path[draw,line width=2pt,black!80!white] (0,4.6875) -- (5,4.6875);
\path[draw,line width=2pt,black!80!white] (0,4.84375) -- (5,4.84375);
\filldraw[red!80!white] (2.5,2.5) circle (0.07);
\filldraw[red!80!white] (2.5,3.75) circle (0.07);
\filldraw[red!80!white] (2.5,4.375) circle (0.07);
\filldraw[red!80!white] (2.5,4.6875) circle (0.07);
\filldraw[red!80!white] (2.5,4.84375) circle (0.07);
\filldraw[red!80!white] (3.75,2.5) circle (0.07);
\filldraw[red!80!white] (3.75,3.75) circle (0.07);
\filldraw[red!80!white] (3.75,4.375) circle (0.07);
\filldraw[red!80!white] (3.75,4.6875) circle (0.07);
\filldraw[red!80!white] (3.75,4.84375) circle (0.07);
\filldraw[red!80!white] (4.375,2.5) circle (0.07);
\filldraw[red!80!white] (4.375,3.75) circle (0.07);
\filldraw[red!80!white] (4.375,4.375) circle (0.07);
\filldraw[red!80!white] (4.375,4.6875) circle (0.07);
\filldraw[red!80!white] (4.375,4.84375) circle (0.07);
\filldraw[red!80!white] (4.6875,2.5) circle (0.07);
\filldraw[red!80!white] (4.6875,3.75) circle (0.07);
\filldraw[red!80!white] (4.6875,4.375) circle (0.07);
\filldraw[red!80!white] (4.6875,4.6875) circle (0.07);
\filldraw[red!80!white] (4.6875,4.84375) circle (0.07);
\filldraw[red!80!white] (4.84375,2.5) circle (0.07);
\filldraw[red!80!white] (4.84375,3.75) circle (0.07);
\filldraw[red!80!white] (4.84375,4.375) circle (0.07);
\filldraw[red!80!white] (4.84375,4.6875) circle (0.07);
\filldraw[red!80!white] (4.84375,4.84375) circle (0.07);
\draw[thick,black](0,-0.4)node{$1$};
\draw[thick,black](5,-0.4)node{$n$};
\draw[thick,black](-0.4,0)node{$1$};
\draw[thick,black](-0.4,5)node{$n$};
\end{tikzpicture}
\end{center}
\caption{All possible pairs $(j,k)$ when both $j$ and $k$ are in $\Rcal_n$}
\label{fig:fast}
\end{figure}

We then define
\begin{equation}
\label{eq:fast}
\tau_{\rm app}^\ast:=\max_{\substack{t_X=X_{(j)},\\ t_Y=Y_{(k)}:\\ j,k\in \Rcal_n}} \left\{\tau(t_X,t_Y)\cdot \sqrt{9n_{t_X,t_Y}(n_{t_X,t_Y}-1)\over 2(2n_{t_X,t_Y}+5)}\right\}.
\end{equation}
A careful inspection of the proof of Theorem \ref{thm:alter} shows that a test that uses $\tau_{\rm app}^\ast$ in place of $\tau^\ast$ remains optimal and consistent under condition \eqref{eq:altercond}. The idea of evaluating a statistic on an approximation set only to reduce the computation cost while retaining statistical power is commonly used in scan statistics \citep[see, e.g.,][]{arias2005near, walther2010optimal, chan2013detection, rivera2013optimal, wang2016structured}. {  The fast $\tau_{\rm app}^\ast$ can be applied to large scale microscopic images, as its computational complexity is almost linear with the number of pixels.}

Another practical challenge is the potential dependence among $X_i$s and $Y_i$s. The range of dependence within either channel is often determined by the numerical aperture of the objective lens, and the fluorescence emission wavelength, as shown previously by \cite{costes2004automatic}. It is important that we preserve such a dependence structure when estimating the sampling distribution of $\tau^\ast$. To this end, we can adopt the strategy advocated by \cite{costes2004automatic}; instead of scrambling the image pixel-by-pixel, we can divide the image into blocks with the number of pixels in each block determined by the point spread function and then scramble the image block-by-block.

With these two adjustments, we are ready to show the whole flow of our new method. {  In Algorithm~\ref{ag:new}, the input image can be an image before or after pre-processing. According to our experience, our method works very well on both raw images (see Section~\ref{sc:real}) and pre-processed images (see Section~\ref{sc:bench}). It is also worth noting that the $p$-value obtained in Algorithm~\ref{ag:new} is only calculated for a single experiment. Multiple comparison correction is needed if we apply Algorithm~\ref{ag:new} on multiple images.} Algorithm~\ref{ag:new} has been implemented in \texttt{R} package \texttt{RKColocal}, which is openly available (see \url{https://github.com/lakerwsl/RKColocal}). 
\begin{algorithm}
\caption{our new method based on $\tau^\ast$ (or $\tau_{\rm app}^\ast$)}
\label{ag:new}
\begin{algorithmic}
\REQUIRE channel intensities $\{X_i\}_{i\in \II}$, $\{Y_i\}_{i\in \II}$, repeating times $B$ and block size $D$
\ENSURE $p$-value
\STATE $E_0$ $\leftarrow$ calculate $\tau^\ast$ (or $\tau_{\rm app}^\ast$) on $\{X_i\}_{i\in \II}$ and $\{Y_i\}_{i\in \II}$.
\FOR{$j=1$ to $B$}
\STATE  $\{\tilde{X}_i\}_{i\in \II}$ $\leftarrow$ block-wise randomly shuffle $\{X_i\}_{i\in \II}$ with block size $D$
\STATE  $E_j$ $\leftarrow$ calculate $\tau^\ast$ (or $\tau_{\rm app}^\ast$) on $\{\tilde{X}_i\}_{i\in \II}$ and $\{Y_i\}_{i\in \II}$
\ENDFOR
\STATE $P$ $\leftarrow$ $\#\{E_j>E_0\}/B$
\RETURN $P$
\end{algorithmic}
\end{algorithm}

\section{Numerical Experiments}
\label{sc:num}

To demonstrate the merits of our proposed method, we conducted several sets of numerical experiments, applying our method on both simulated and biological image data.

\subsection{Simulated Data Examples}

To simulate the positive dependence between the two channels, we consider a setting based on Clayton copula \citep[see, e.g.,][]{nelsen2006}. More specifically, under the null hypothesis $H_0$ (no colocalization), we simulated the intensities of each pixel $X$ and $Y$ according to
\begin{equation}
\label{eq:monotran}
X=e^{8(U-0.5)}\quad {\rm and }\quad Y=e^{8(V-0.5)},
\end{equation}
where $U$ and $V$ are independently drawn from a uniform distribution between 0 and 1, $Unif([0,1])$. The image is blurred by applying gaussian smoothing (point-spread function (PSF) is gaussian kernel) after intensities of each pixel are simulated following the rule above. 
A typical example of a simulated dual channel image without colocalization is shown in the left most column of Figure~\ref{fg:exmpsim}.
To generate colocalization under the alternative hypothesis $H_1$, we first simulated bivariate random variables $(U,V)$ from a distribution:
$$
{d^2F(u,v)\over dudv}=\begin{cases}g_{\theta}\left({u-R\over 1-R},{v-R\over 1-R}\right)& (u,v)\in [R,1]\times [R,1] \\ 1 & (u,v)\in [0,1]^2\setminus [R,1]^2 \end{cases}
$$
where $g_{\theta}(u,v)$, $0<\theta<\infty$, is the density function of Clayton copula distribution, that is
\begin{align*}
g_{\theta}(u,v)&={d^2\over dudv}(u^{-\theta}+v^{-\theta}-1)^{-1/\theta}\\
&=(\theta+1)(uv)^{-(\theta+1)}(u^{-\theta}+v^{-\theta}-1)^{-(2\theta+1)/\theta}.
\end{align*}
Here, $R$ is a parameter between 0 and 1, representing a threshold above which colocalization occurs, as positive quadrature dependence occurs when $U,V>R$. A larger $R$ suggests colocalization occurs with less signal, so that detection of the colocalization is more difficult (compared to the second and third column in Figure~\ref{fg:exmpsim}). Another parameter, $\theta$, is a number larger than 0, controlling the the dependence/colocalization level above the thresholds $R$. Specifically, the degree of positive quadrature dependence when $U,V>R$ is $\theta/(\theta+2)$ i.e. $T(R,R)=\theta/(\theta+2)$. Thus, a larger $\theta$ implies higher correlation among the given signal (compared to the second and fourth column in Figure~\ref{fg:exmpsim}). The pixel intensities $(X,Y)$ follow the same monotone transformation of $(U,V)$ in (\ref{eq:monotran}), and the image is also blurred by gaussian smoothing. The three right images of Figure~\ref{fg:exmpsim} show examples of dual channel images with varying colocalization.
\begin{figure}
    \centering
        \begin{tikzpicture}[scale=1]
  \node[anchor=south west,inner sep=0] at (0,0) {\includegraphics[width=0.12\textwidth]{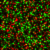}};
  \node[anchor=north west,inner sep=0] at (0,0) {\includegraphics[width=0.12\textwidth]{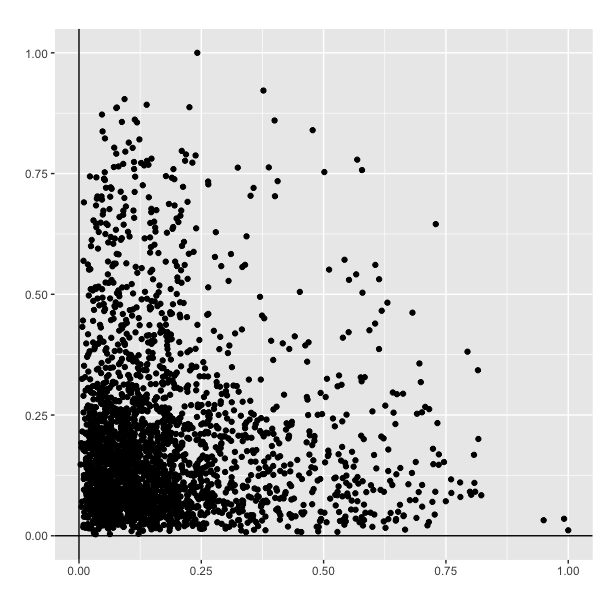}};
  \node[anchor=south west,inner sep=0] at (2.25,0) {\includegraphics[width=0.12\textwidth]{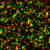}};
  \node[anchor=north west,inner sep=0] at (2.25,0) {\includegraphics[width=0.12\textwidth]{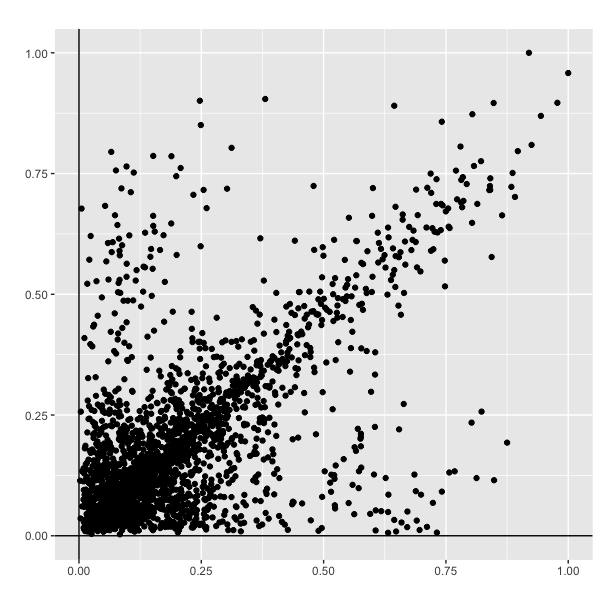}};
  \node[anchor=south west,inner sep=0] at (4.5,0) {\includegraphics[width=0.12\textwidth]{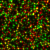}};
  \node[anchor=north west,inner sep=0] at (4.5,0) {\includegraphics[width=0.12\textwidth]{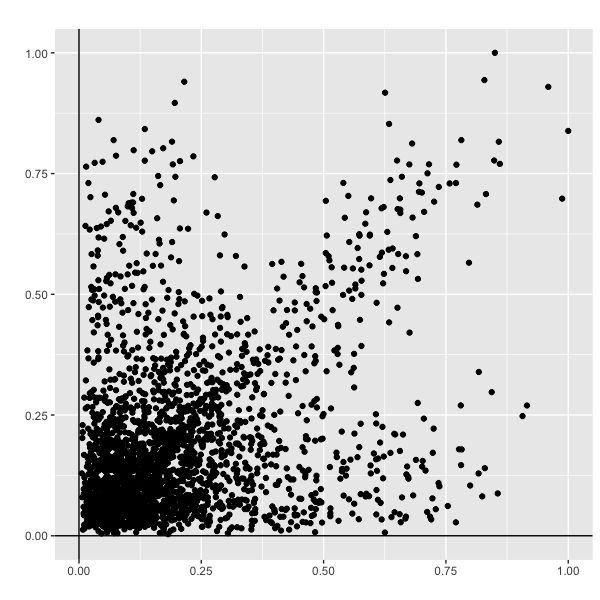}};
  \node[anchor=south west,inner sep=0] at (6.75,0) {\includegraphics[width=0.12\textwidth]{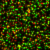}};
  \node[anchor=north west,inner sep=0] at (6.75,0) {\includegraphics[width=0.12\textwidth]{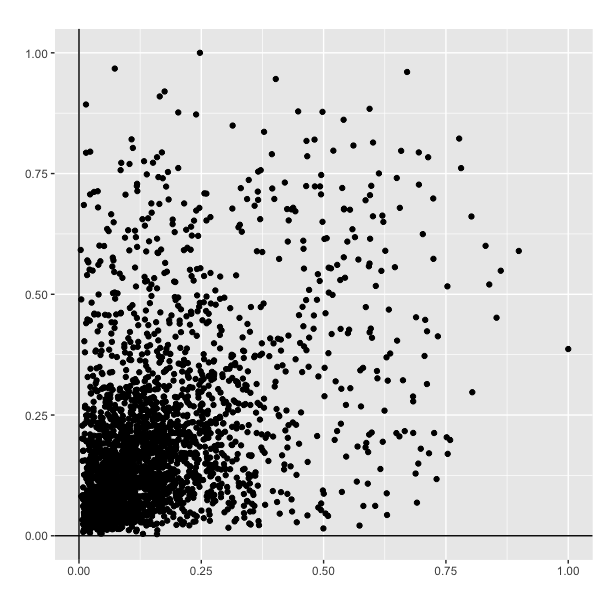}};
    \end{tikzpicture}
    \caption{Example images of simulated dual-channel data and their corresponding scatter plots (image size: $50\times 50$). From left to right: an example simulated dual-channel image without colocalization (under the null hypothesis $H_0$), with colocalization when $R=0.3$ and $\theta=100$ (under the alternative hypothesis $H_1$), with colocalization when $R=0.5$ and $\theta=100$ (under the alternative hypothesis $H_1$), and with colocalization when $R=0.3$ and $\theta=5$ (under the alternative hypothesis $H_1$).}\label{fg:exmpsim}
\end{figure}

In this simulation experiment, we compared our new method with the traditional colocalization quantitative measures, including Pearson's correlation coefficient $r$ and Manders' split coefficients $M_1,M_2$. In $M_1$ and $M_2$, the thresholds $\alpha_X$ and $\alpha_Y$ were chosen by applying Otsu's method to each channel. To make a comparison possible, we employed a statistical hypothesis testing framework and reported the decision associated with each quantitative measure.
Specifically, we simulated the null distribution\footnote{Null distribution is the distribution of the test statistic, i.e. $r$, $M_1$, $M_2$, or $\tau_{\rm app}^\ast$, under the null hypothesis $H_0$.} of colocalization quantitative measures, $r$, $M_1$, $M_2$ or $\tau_{\rm app}^\ast$, and identified the upper 5\% quantile of the null distribution as the critical value based on 1000 Monte Carlo simulations. In this way, we can ensure that the Type I error (the probability of false discovery) is controlled at level 5\% up to Monte Carlo simulation error. 
The reported decision rejects the null hypothesis if the corresponding colocalization quantitative measure exceeded its respective critical value, failing to reject the null hypothesis otherwise. 
Under this statistical hypothesis testing framework, the performance of colocalization quantitative measures can be assessed through the power of testing, i.e. the probability of rejecting the null hypothesis under the alternative hypothesis $H_1$. In this simulation study, the power $\beta$ is estimated by the proportion of null hypothesis rejection, i.e.
\begin{equation}
\label{eq:powerdef}
\beta={\rm number\ of\ null\ hypothesis\ rejection\over\rm number\ of\ simulation\ runs}.
\end{equation}
Clearly, a larger power $\beta$ suggests the colocalization measure is more efficient in colocalization detection.

To investigate the performance of different colocalization measures, we compare their power $\beta$ defined in (\ref{eq:powerdef}) when data is generated according to the alternative hypothesis model (colocalization exists) under different values of $R$ and $\theta$. We conducted the simulation experiments by varying parameters $R$ and $\theta$ in the alternative hypothesis model simultaneously. Specifically, we considered different values of $R$: 0.7, 0.8, and 0.9 and a range of $\theta$ from 1 to 10. For each combination of $R$ and $\theta$, we repeated the experiment 1000 times. In each experimental run, we simulated colocalized data on a $50\times50$ lattice and applied tests of $r$, $M_1$, $M_2$, or $\tau_{\rm app}^\ast$ on the simulated data. The decision of each colocalization measure was recorded and the power $\beta$ in 1000 experiments was calculated by (\ref{eq:powerdef}). The results of power $\beta$ are summarized in Figure~\ref{fg:fgm}. In Figure~\ref{fg:fgm}, the power $\beta$ of all methods increases along with $\theta$ increasing and $R$ decreasing, which is consistent with our discussion in the simulation setting introduction. These results show that the power $\beta$ of our new method is larger than that of Pearson's correlation coefficient and Manders' split coefficients at most $R$ and $\theta$, especially when there is less colocalized signal (i.e. $R$ is large). Therefore, we can conclude that $\tau^\ast_{\rm app}$ out-performs $r$, $M_1$, and $M_2$.

\begin{figure}[h]
\begin{center}
\includegraphics[width=0.5\textwidth]{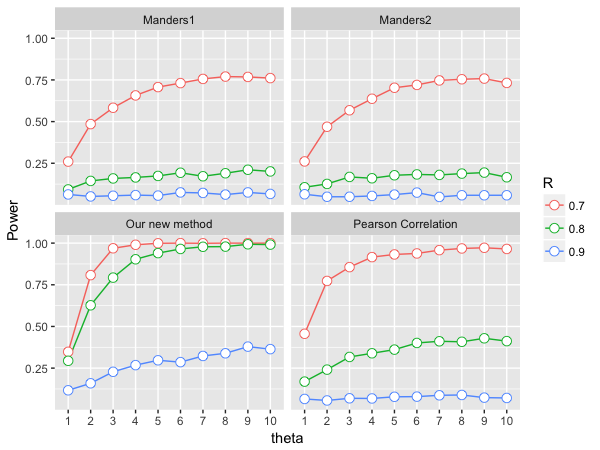}
\caption{The power comparison among colocalization quantitative measures under Clayton copula model. In each plot, the $x$ axis is the value of $\theta$ and $y$ axis is the value of power $\beta$ (between 0 and 1). Different colors of curve represent different values of $R$.}
\label{fg:fgm}
\end{center}
\end{figure}

\subsection{Benchmark Real Data Examples}
\label{sc:bench}

Next, we applied our new method to several benchmark real data examples from \cite{ICCSN}. The first example detected colocalization between the the ryanodine receptor (RyR) and the estrogen receptor alpha (ER$\alpha$) in a mouse heart cell (in Figure~\ref{fg:Wu1}). As described in \cite{ICCSN}, there is no evidence that these two proteins interact. The second example compared the distribution of RyR and $\alpha$1C calcium channel ($\alpha$1C) in a mouse cell, which are known to colocalize (in Figure~\ref{fg:Wu2}). The third example measured the behavior of the $\alpha$-subunit of Ca$^{2+}$ and voltage-dependent large conductance Kþ channels (MaxiK-$\alpha$) and that of $\alpha$-tubulin (in Figure~\ref{fg:Wu3}). These two types of proteins are partially colocalized according to \cite{ICCSN}.

For each of the three examples, the proposed metric $\tau^\ast_{\rm app}$ and the histogram of its null distributions obtained via block-wise permutations are given in Figure~\ref{fg:Wu}. In these experiments, the number of permutations is 1000 and block size is $\lfloor\min(\sqrt{a},\sqrt{b})\rfloor$ when the size of image is $a\times b$. These results are fairly consistent with those reported in \cite{ICCSN}. It is worth noting that \cite{ICCSN} also ran many existing methods, including Pearson correlation coefficient and Manders' split coefficients, on these data examples and concluded that such quantification methods are prone to false discovery. In particular, both Pearson correlation coefficient and Manders' split coefficients identified colocalization in the first example \citep[see][]{ICCSN}, contrary to the biology behind it (in Figure~\ref{fg:Wu1}).

\begin{figure*}
    \centering
    \begin{subfigure}[b]{0.23\textwidth}
    \centering
        \begin{tikzpicture}[scale=0.9]
  \node[anchor=south west,inner sep=0] at (0,0) {\includegraphics[width=\textwidth]{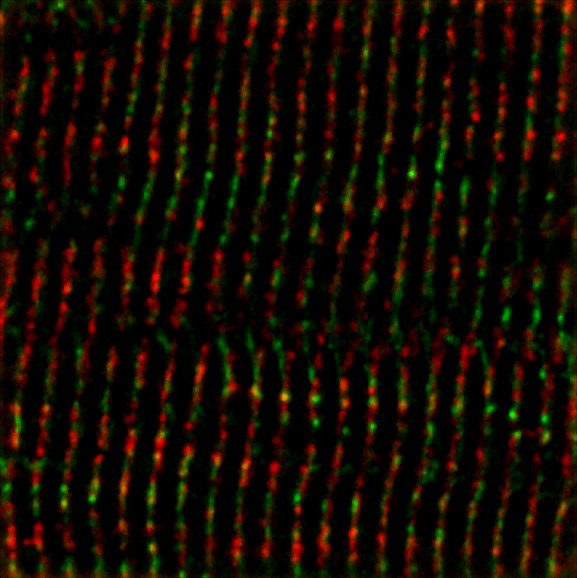}};
  \node[anchor=south west,inner sep=0] at (0,-2.5) {\includegraphics[width=\textwidth]{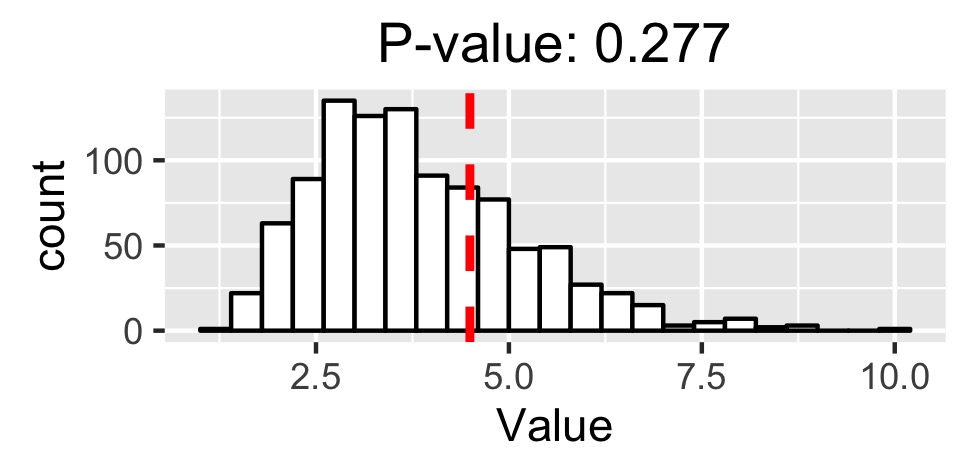}};
    \end{tikzpicture}
        \caption{Poor colocalization is expected between RyR and ER$\alpha$. (size: $577\times 578$)}
        \label{fg:Wu1}
    \end{subfigure}
    \hspace{0.05\textwidth}
    \begin{subfigure}[b]{0.23\textwidth}
    \centering
        \begin{tikzpicture}[scale=0.9]
  \node[anchor=south west,inner sep=0] at (0,0) {\includegraphics[width=\textwidth]{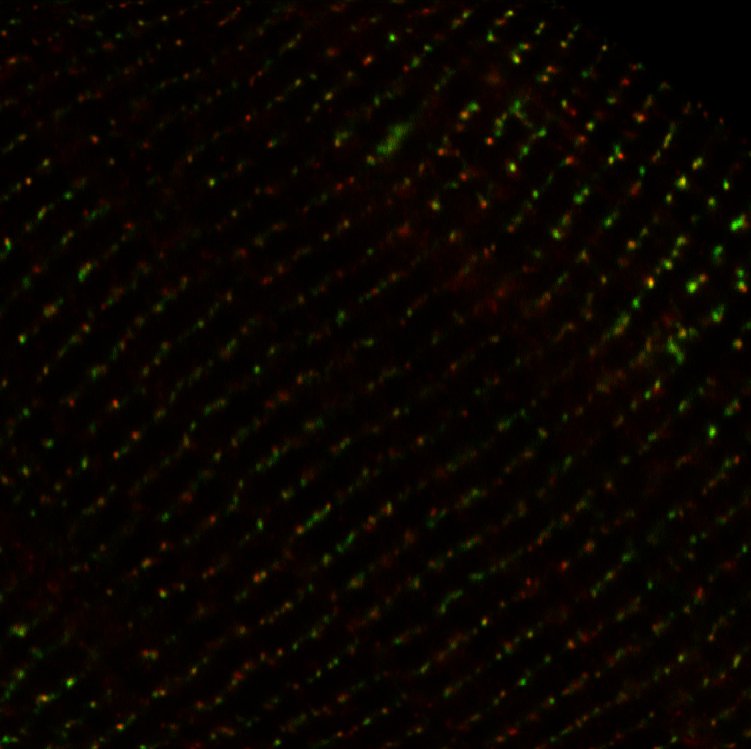}};
  \node[anchor=south west,inner sep=0] at (0,-2.5) {\includegraphics[width=\textwidth]{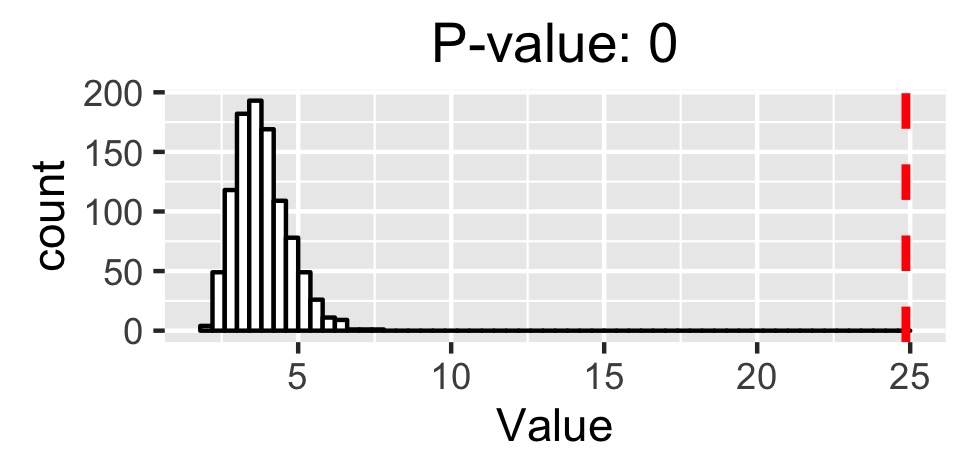}};
    \end{tikzpicture}
        \caption{Partial colocalization is expected between RyR and $\alpha$1C. (size: $751\times 749$)}
        \label{fg:Wu2}
    \end{subfigure}
    \hspace{0.05\textwidth}
    \begin{subfigure}[b]{0.23\textwidth}
    \centering
        \begin{tikzpicture}[scale=0.9]
  \node[anchor=south west,inner sep=0] at (0,0) {\includegraphics[width=\textwidth]{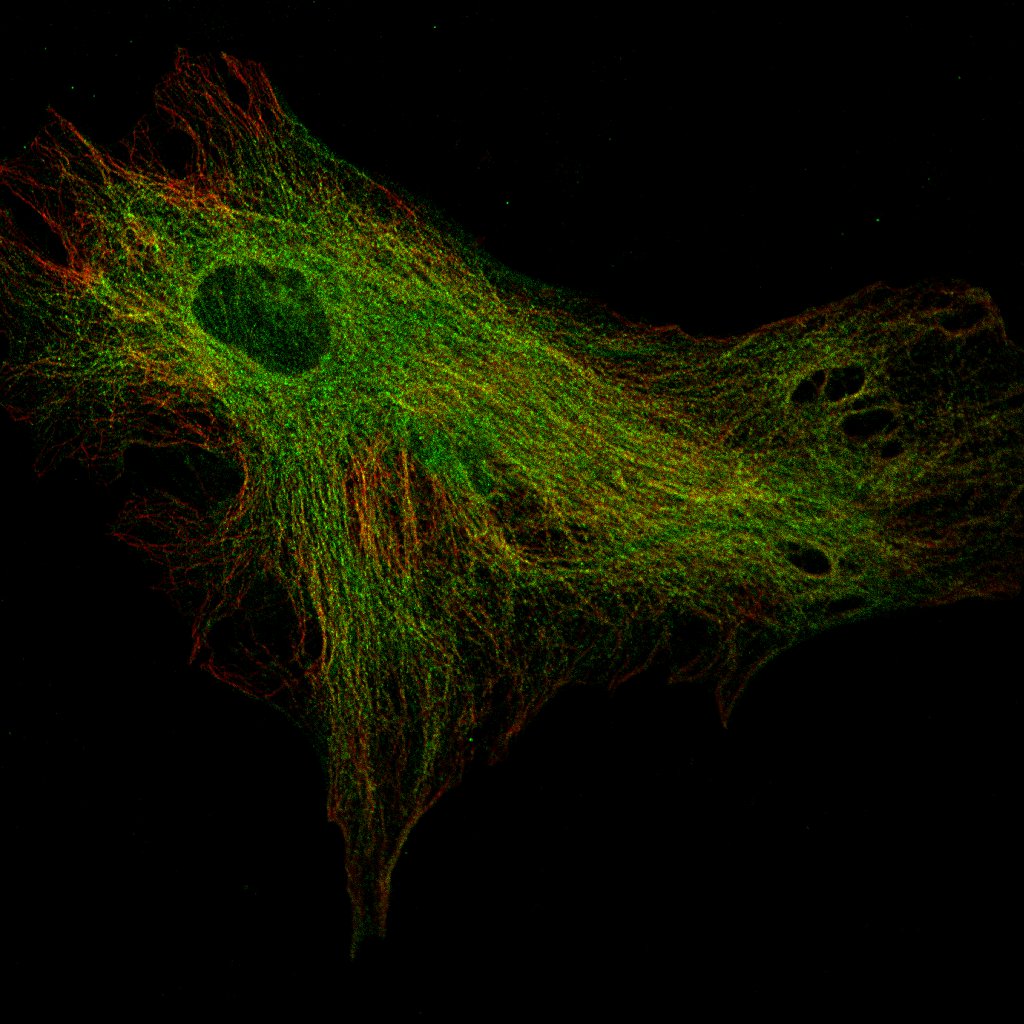}};
  \node[anchor=south west,inner sep=0] at (0,-2.5) {\includegraphics[width=\textwidth]{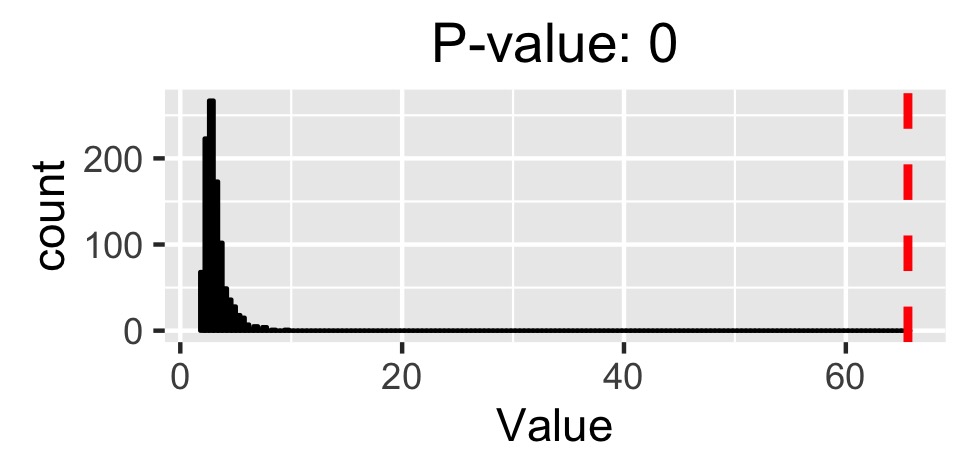}};
    \end{tikzpicture}
        \caption{ Partial colocalization is expected bewteen Ca$^{2+}$ and MaxiK-$\alpha$. (size: $1024\times 1024$)}
        \label{fg:Wu3}
    \end{subfigure}
%    \begin{subfigure}[b]{0.23\textwidth}
%    \centering
%        \begin{tikzpicture}[scale=0.9]
%  \node[anchor=south west,inner sep=0] at (0,0) {\includegraphics[width=\textwidth]{figures/ProtoData1}};
%  \node[anchor=south west,inner sep=0] at (0,-2.5) {\includegraphics[width=\textwidth]{figures/ProtoData1NullDist}};
%    \end{tikzpicture}
%        \caption{Poor colocalization is expected between AXR4-GFP and $\gamma$-Cop. (size: $312\times 312$)}
%        \label{fg:Wu4}
%    \end{subfigure}
%    \begin{subfigure}[b]{0.23\textwidth}
%    \centering
%        \begin{tikzpicture}[scale=0.9]
%  \node[anchor=south west,inner sep=0] at (0,0) {\includegraphics[width=\textwidth]{figures/ProtoData2}};
%  \node[anchor=south west,inner sep=0] at (0,-2.5) {\includegraphics[width=\textwidth]{figures/ProtoData2NullDist}};
%    \end{tikzpicture}
%        \caption{Good colocalization is expected between GFP-LTI6a and N-YFP-AUX1. (size: $310\times 310$)}
%        \label{fg:Wu5}
%    \end{subfigure}
%    \begin{subfigure}[b]{0.23\textwidth}
%    \centering
%        \begin{tikzpicture}[scale=0.9]
%  \node[anchor=south west,inner sep=0] at (0,0) {\includegraphics[width=\textwidth]{figures/ProtoData3}};
%  \node[anchor=south west,inner sep=0] at (0,-2.5) {\includegraphics[width=\textwidth]{figures/ProtoData3NullDist}};
%    \end{tikzpicture}
%        \caption{Good colocalization is expected bewteen AXR4-GFP and BIP. (size: $318\times 318$)}
%        \label{fg:Wu6}
%    \end{subfigure}
    \caption{Colocalization analysis result of $\tau^\ast_{\rm app}$ on benchmark real data examples from \cite{ICCSN}.}\label{fg:Wu}
\end{figure*}

\subsection{Real Data Examples}
\label{sc:real}

Finally, we applied our new method to  real biological datasets. The first example is a set of microscopic images (image size: $1024\times 1024$) of HeLa cells expressing the structural protein, Gag, of human immunodeficiency virus type 1 (HIV-1).  HIV-1 virus particles assemble at the plasma membrane and are composed of $\sim$2000 molecules of Gag \cite{freed2015,sundquist2012}.  We applied the same analysis procedures as for the previous sections. There are three conditions with corresponding images. In the first two conditions (Figure~\ref{fg:Real1} and Figure~\ref{fg:Real2}), HIV-1-Gag (green channel) was fused to cyan fluorescence protein (CFP) and MS2 protein (red channel) was fused to yellow fluorescent protein (YFP). When expressed in cells as the only viral factor, HIV-1-Gag primarily forms particles at the edge of cells; these particles are only occasionally internalized by the cell and then observed near the nucleus. In the first condition, MS2 protein was designed to remain in the nucleus (Figure~\ref{fg:Real1}), resulting in a negative control with low levels of colocalization between Gag-CFP and MS2-YFP.  In the second condition (Figure~\ref{fg:Real2}), Gag-CFP was expressed from an mRNA engineered to contain multiple copies of an RNA stem loop that binds MS2-YFP with high specificity \cite{becker2017}.  Therefore, we expected significantly higher colocalization levels between Gag-CFP and MS2-YFP in Figure~\ref{fg:Real2} as compared to those in Figure~\ref{fg:Real1}. We summarized $p$-values and the corresponding approximated null distributions in Figure~\ref{fg:realdata}. The results show colocalization was discovered in Figure~\ref{fg:Real2} if we rejected the null hypothesis when the $p$-value was smaller than 10\%. On the other hand, no significant colocalization was found as $p$-values in Figure~\ref{fg:Real1} were both larger than 70\%.  In the final condition (Figure~\ref{fg:Real3}), two constructs expressing synthetic Gags were fused to CFP and YFP, respectively. As Gag should self-assemble into multi-colored particles, we expected the highest levels of colocalization in this condition between Gag-CFP and Gag-YFP as compared to the two previous conditions. After applying our new method on these images, we obtained a very strong, significant level of colocalization, with $p$-values far less than 0.1\%.

We also applied our new method to another set of biological datasets.  These microscopic images (image size: $512\times512$) represent snapshots of a model used to elucidate signal responses during cellular wounding and the subsequent repair process.  Rho GTPases, including Rho and Cdc42, control an enormous variety of processes and play a role during {\it Xenopus} oocyte wound repair \cite{simon2013}; however, they do not overlap during the wound repair process and therefore resulted in low levels, $p$-values larger than 85\%, of detectible colocalization (Figure~\ref{fg:Bill1}). Calcium is an initially crude signal in wound repair, and PKC$\beta$ participates in Rho and Cdc42 activation and is also recruited to cell wounds \cite{vaughan2014}. Calcium defines a broad region within which PKC$\beta$ can be found, and therefore, some level of colocalization is expected, which was easily detected using our method (Figure~\ref{fg:Bill2}).  Finally, Rho GTPases including Rho and Cdc42, have also been implicated in cortical cytoskeleton repair, so the actin regulatory protein, cortactin, largely overlaps with Cdc42, for example, during the wound healing process.  The highest levels of colocalization were expected between Cdc42 and cortactin within this group of images, and this was measured by our method (Figure~\ref{fg:Bill3}). Once again, this work demonstrates our new method's robustness within complex, biological contexts.

For the microscopic images of both biological datasets, we also applied Pearson correlation coefficient $r$ and Manders' split coefficients $(M_1,M_2)$. For $M_1$ and $M_2$, the thresholds $\alpha_X$ and $\alpha_Y$ are still determined by Otsu's method. In $\tau^\ast_{\rm app}$, the lower bound of thresholds scanned was chosen as maximum of Otsu's threshold and median value. To obtain a $p$-value, the microscopic images were permuted block-wise as described in Section~\ref{sc:fast}. In these experiments, 1000 permutations were carried out and the block size was 32, the square root of the size of the image. The value of colocalization measures and corresponding $p$-values calculated by the permutation test are summarized in Table~\ref{tb:realdata}. The results in Table~\ref{tb:realdata} suggest that our new statistics $\tau^\ast_{\rm app}$ is able to control false discovery far better than Pearson correlation coefficient and Manders' split coefficients. Moreover, the value of our $\tau^\ast_{\rm app}$ and corresponding $p$-value can reflect the level of colocalization more precisely. It is worth noting that the size of the newly proposed index $\tau_{\rm app}^\ast$ can also be affected by the area of colocalized region. In other words, $\tau_{\rm app}^\ast$ is relatively small when the colocalization happens in a small region. For example, $\tau_{\rm app}^\ast$ is relatively small ($p$-value is relatively large) in Figure~\ref{fg:Real2} when colocalization only concentrates at the edge of cell.

\begin{figure}[!htbp]
    \centering
    \begin{subfigure}[b]{0.5\textwidth}
    \centering
        \begin{tikzpicture}[scale=1]
        \draw[line width=2pt,black!80!white] (-0.25,4.35) rectangle (8.6,-2.35);
  \node[anchor=south west,inner sep=0] at (0,-2.2) {\includegraphics[width=0.45\textwidth]{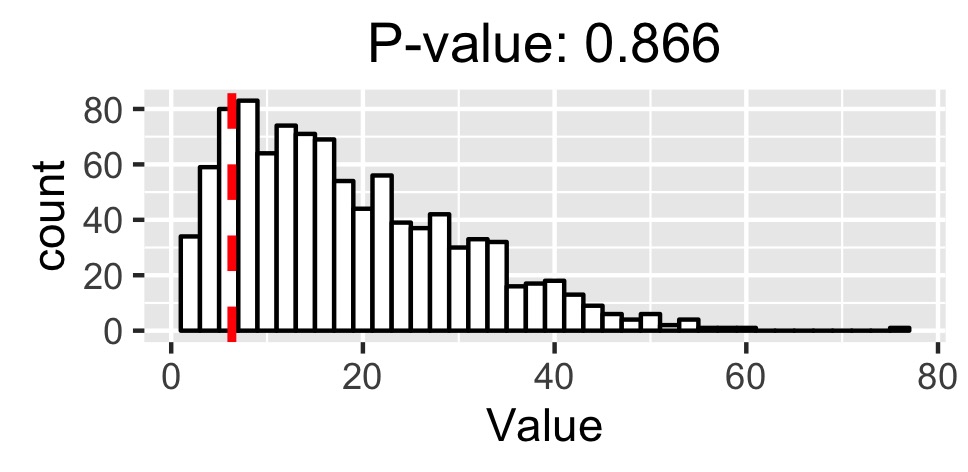}};
  \node[anchor=south west,inner sep=0] at (0,0) {\includegraphics[width=0.45\textwidth]{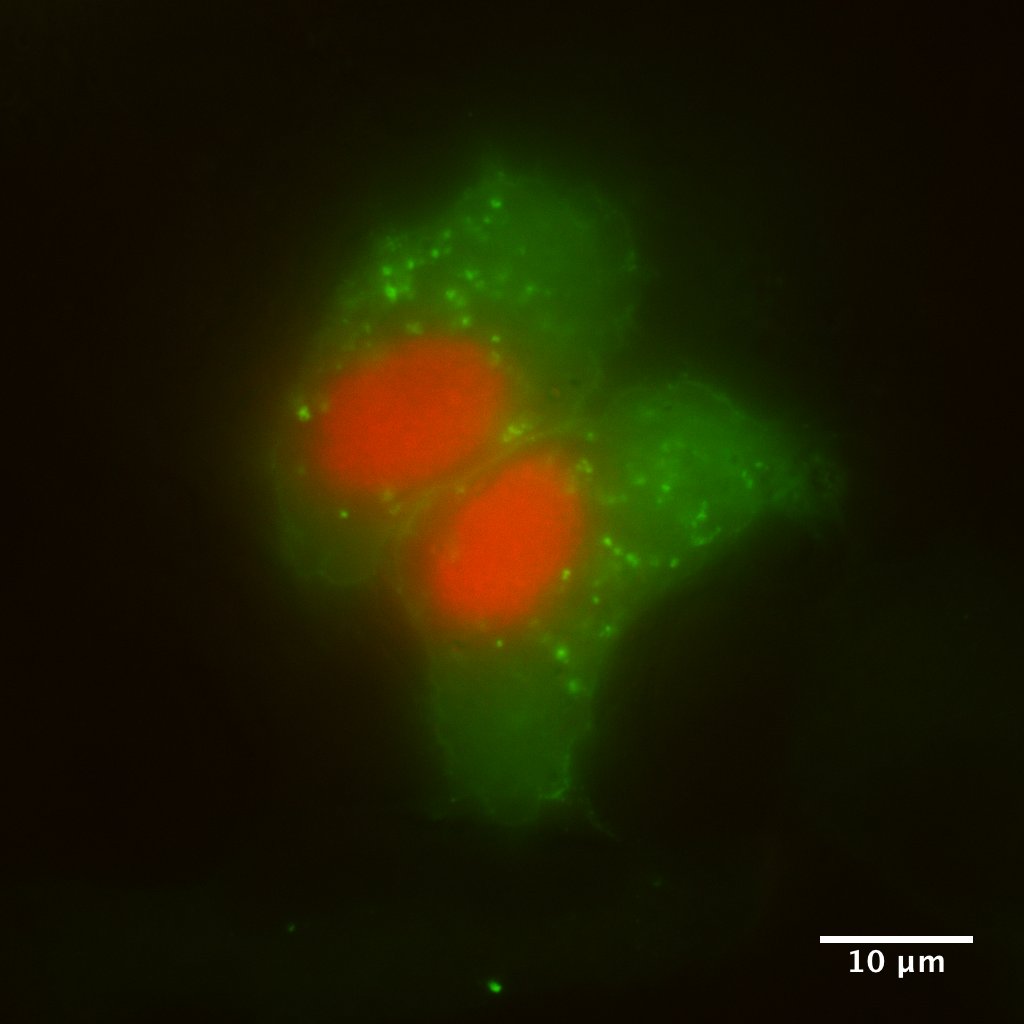}};
   \node[anchor=south west,inner sep=0] at (4.25,-2.2) {\includegraphics[width=0.45\textwidth]{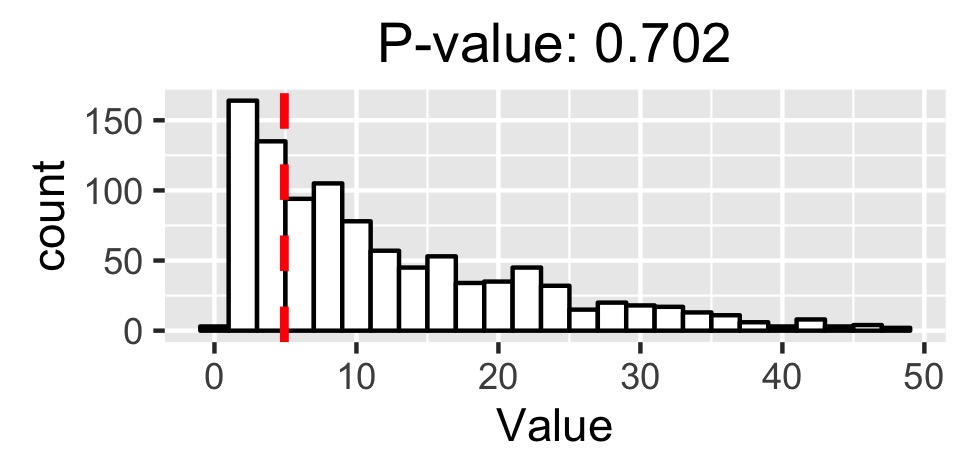}}; 
  \node[anchor=south west,inner sep=0] at (4.25,0) {\includegraphics[width=0.45\textwidth]{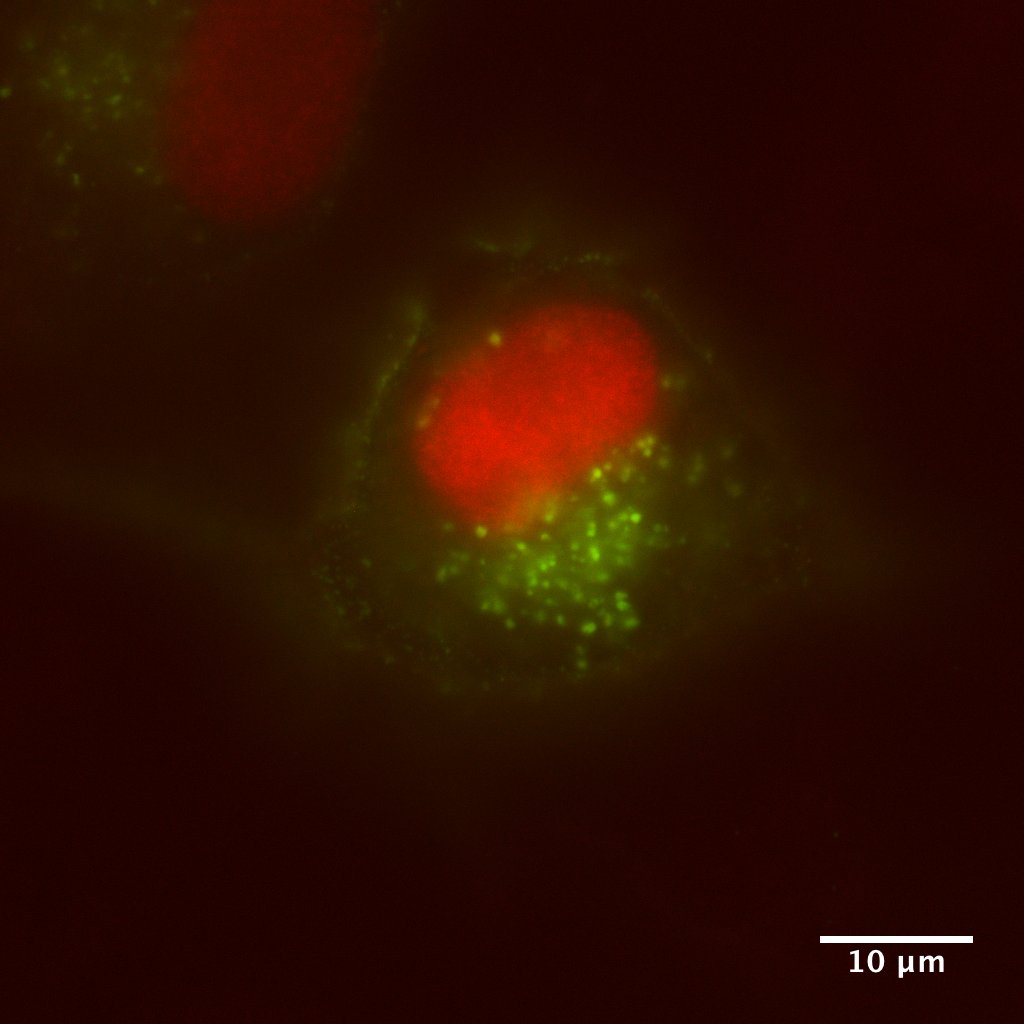}};
    \end{tikzpicture}
        \caption{Poor colocalization examples (image size: $1024\times 1024$): Low levels of colocalization between MS2-YFP and Gag-CFP are expected.}
        \label{fg:Real1}
    \end{subfigure}
    \begin{subfigure}[b]{0.5\textwidth}
    \centering
        \begin{tikzpicture}[scale=1]
        \draw[line width=2pt,black!80!white] (-0.25,4.35) rectangle (8.6,-2.35);
  \node[anchor=south west,inner sep=0] at (0,-2.2) {\includegraphics[width=0.45\textwidth]{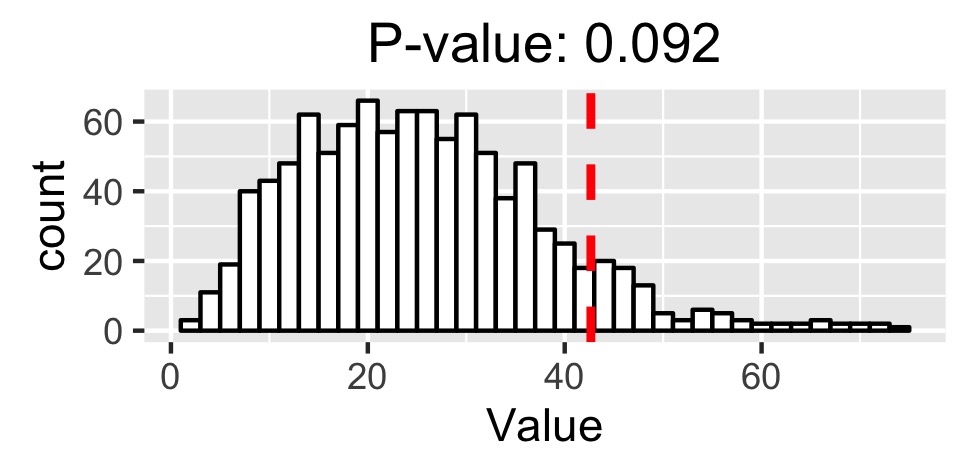}};
  \node[anchor=south west,inner sep=0] at (0,0) {\includegraphics[width=0.45\textwidth]{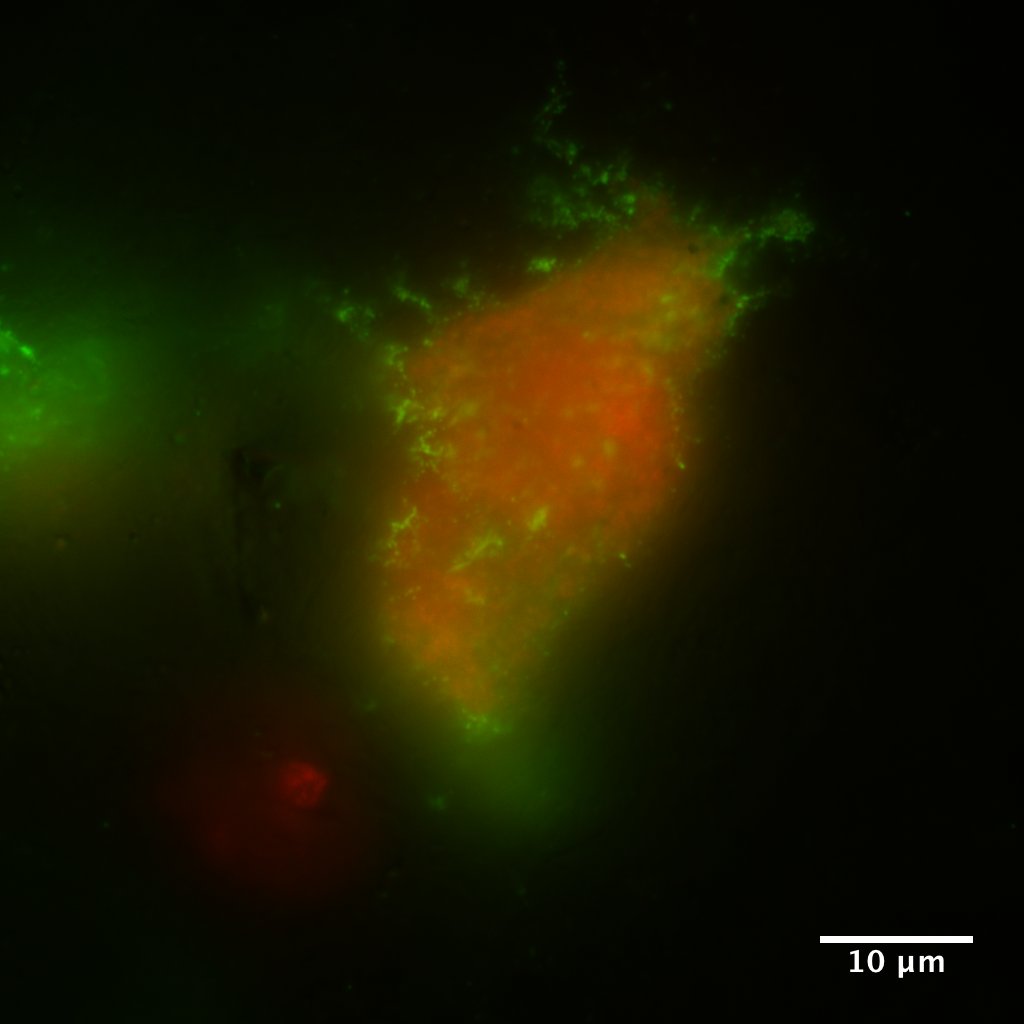}};
   \node[anchor=south west,inner sep=0] at (4.25,-2.2) {\includegraphics[width=0.45\textwidth]{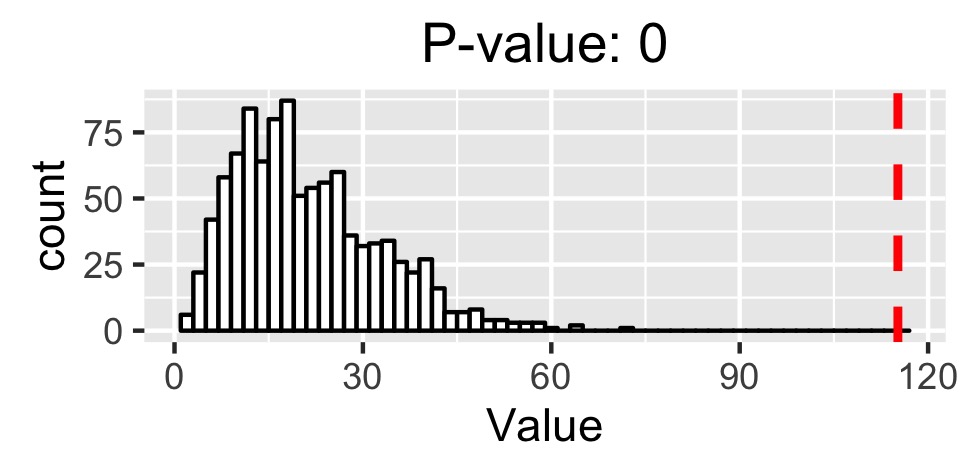}};  
  \node[anchor=south west,inner sep=0] at (4.25,0) {\includegraphics[width=0.45\textwidth]{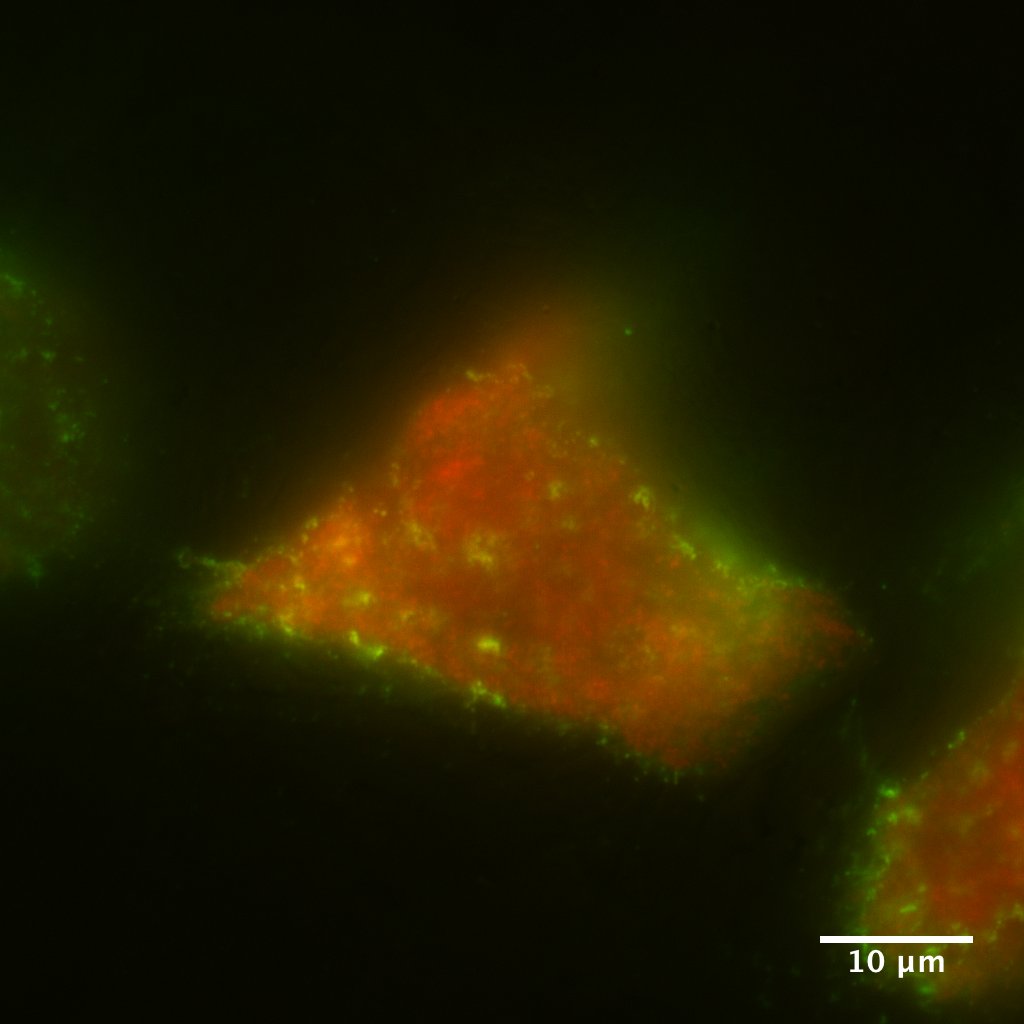}};
    \end{tikzpicture}
        \caption{Good colocalization examples (image size: $1024\times 1024$): High level of colocalization between Gag-CFP and MS2-YFP are expected.}
        \label{fg:Real2}
    \end{subfigure}
    \begin{subfigure}[b]{0.5\textwidth}
    \centering
        \begin{tikzpicture}[scale=1]
        \draw[line width=2pt,black!80!white] (-0.25,4.35) rectangle (8.6,-2.35);
  \node[anchor=south west,inner sep=0] at (0,-2.2) {\includegraphics[width=0.45\textwidth]{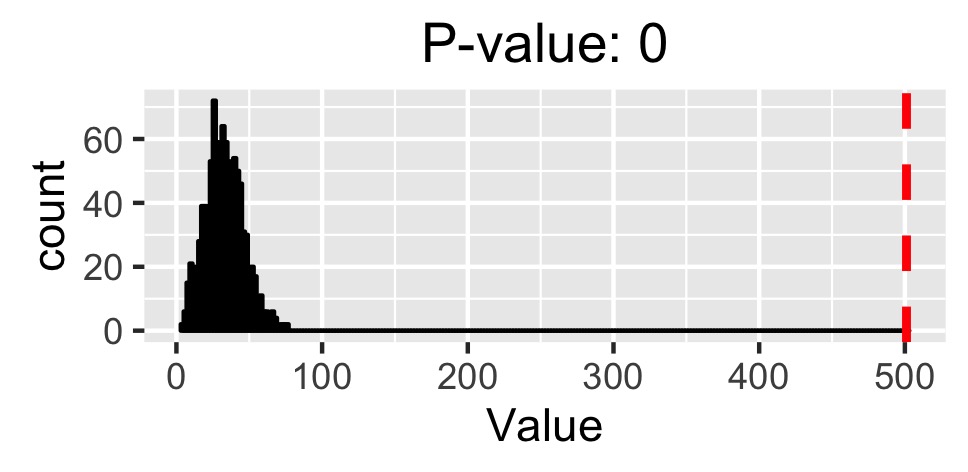}};
  \node[anchor=south west,inner sep=0] at (0,0) {\includegraphics[width=0.45\textwidth]{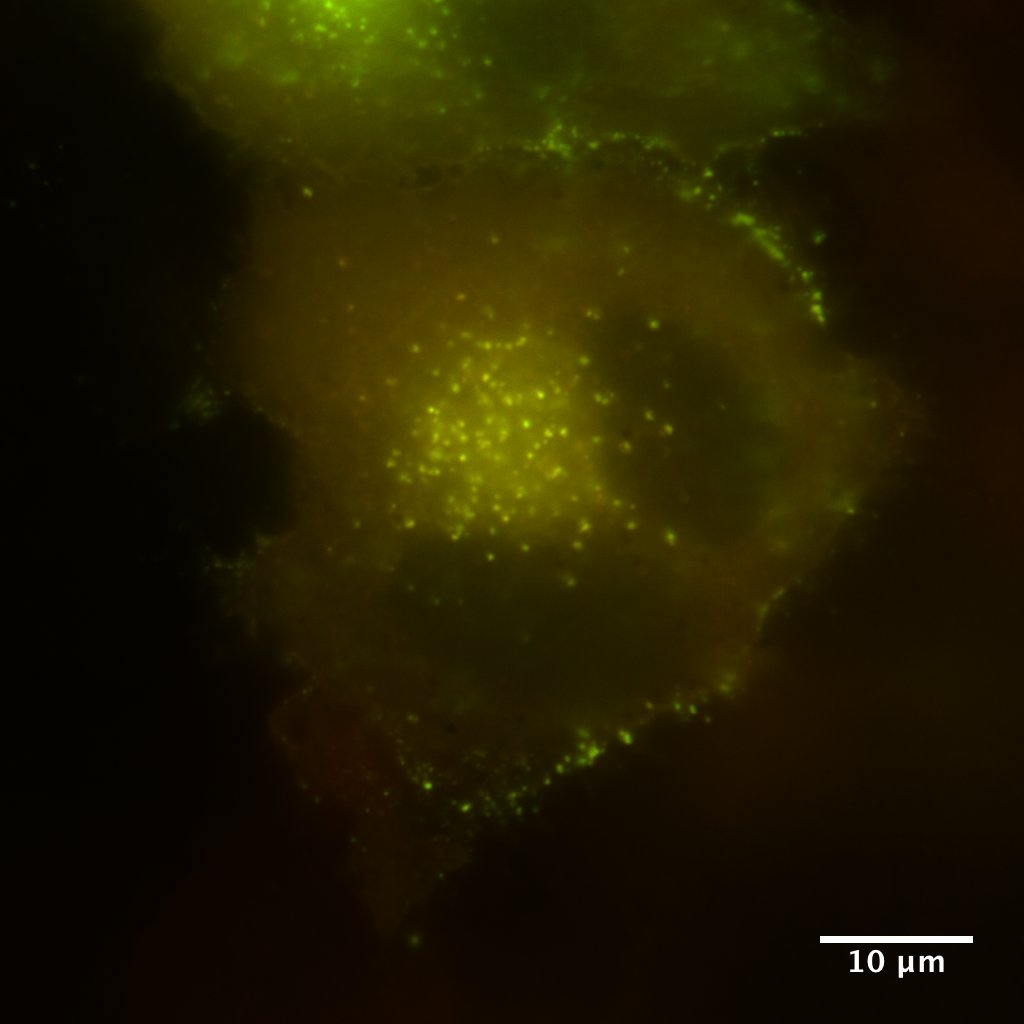}};
  \node[anchor=south west,inner sep=0] at (4.25,-2.2) {\includegraphics[width=0.45\textwidth]{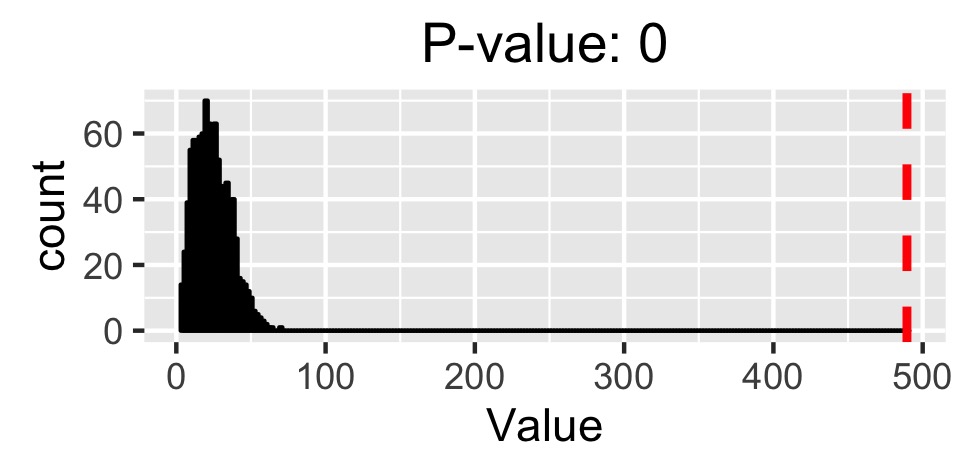}};  
  \node[anchor=south west,inner sep=0] at (4.25,0) {\includegraphics[width=0.45\textwidth]{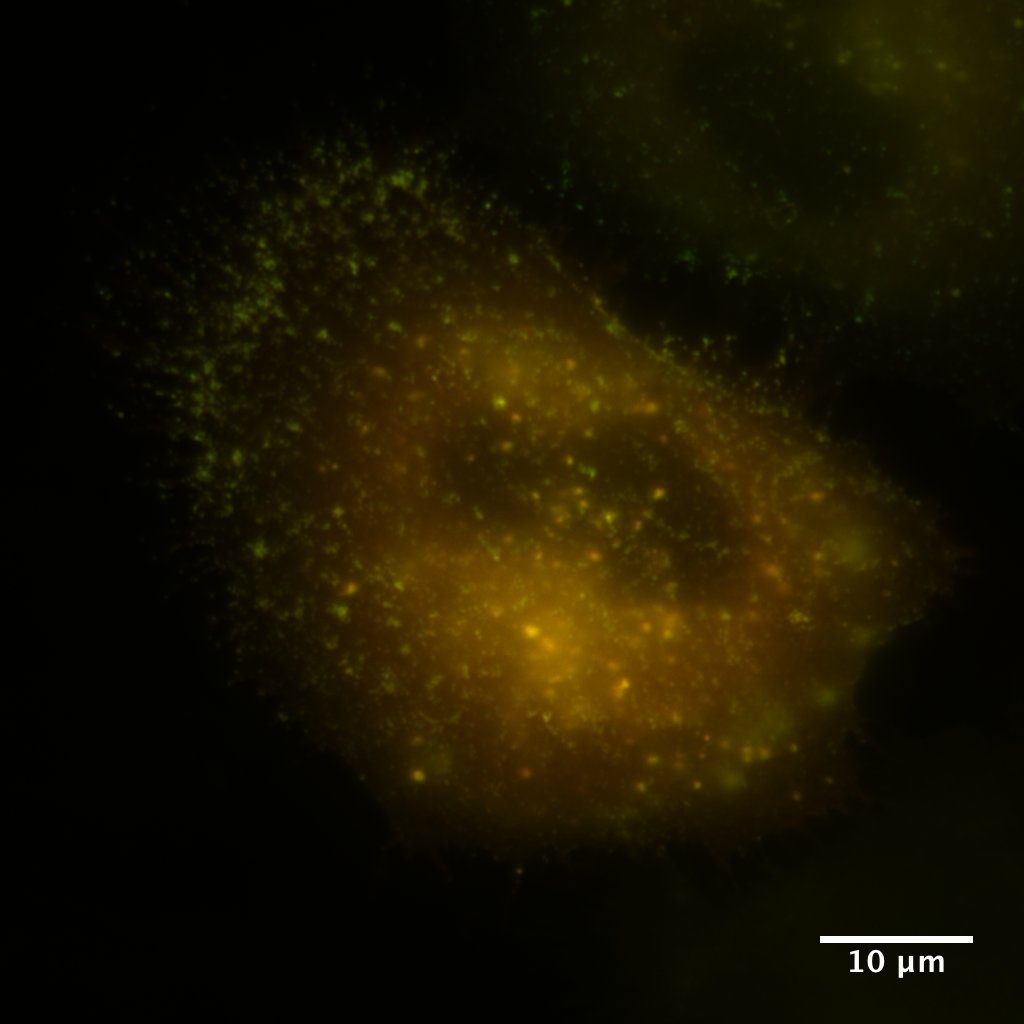}}; 
    \end{tikzpicture}
        \caption{Strong colocalization examples (image size: $1024\times 1024$): The highest level of colocalization between Gag-CFP and Gag-YFP are expected.}
        \label{fg:Real3}
    \end{subfigure}
    \caption{$p$-value and null distribution obtained by our new method on real data examples of Hella cells.}\label{fg:realdata}
\end{figure}

\begingroup
\renewcommand{\arraystretch}{1.5}
\begin{table*}[!htbp]
\centering
\begin{tabular}{c c c c c c c c c c c c c c c}
\hline
\hline
& & & & \multicolumn{2}{c}{Pearson $r$} & &  \multicolumn{2}{c}{Manders $M_1$}& &\multicolumn{2}{c}{Manders $M_2$}& &\multicolumn{2}{c}{New measure $\tau^\ast_{\rm app}$} \\
 \cline{5-6}  \cline{8-9}  \cline{11-12} \cline{14-15}
& & & & $r$ & $p$-value & & $M_1$ & $p$-value & & $M_2$ & $p$-value & & $\tau^\ast_{\rm app}$ & $p$-value \\
\hline
\multirow{2}{*}{Figure~\ref{fg:Real1} (Poor colocalization)}& & left         & &  $ 0.599 $ & $<0.1\%$ & &  $0.425$ & $<0.1\%$ & &  $0.122$ & $<0.1\%$ & &  $6.328$ & $86.6\%$ \\ 
& & right                                                                                                & &  $ 0.581 $ & $<0.1\%$ & &  $0.276$ & $<0.1\%$ & &  $0.102$ & $ <0.1\% $ & &  $4.922$ & $70.2\%$ \\ 
\hline
\multirow{2}{*}{Figure~\ref{fg:Real2} (Good colocalization)} & & left   & &  $ 0.843 $ & $<0.1\% $ & &  $0.629$ & $<0.1\%$ & &  $0.308$ & $<0.1\%$ & &  $42.659$ & $9.2\%$ \\ 
& & right                                                                                                & &  $ 0.909 $ & $<0.1\%$ & &  $0.629$ & $<0.1\%$  & &  $0.355$ & $<0.1\%$ & &  $115.191$ & $<0.1\%$ \\ 
\hline
\multirow{2}{*}{Figure~\ref{fg:Real3} (Strong colocalization)} &  & left & &  $ 0.972 $ & $<0.1\%$ & &  $0.552$ & $<0.1\%$ & &  $0.658$ & $<0.1\%$ & &  $501.111$ & $<0.1\%$ \\ 
& & right                                                                                                & &  $ 0.983 $ & $<0.1\%$ & &  $0.609$ & $<0.1\%$ & &  $0.539$ & $<0.1\%$ & &  $489.459$ & $<0.1\%$ \\ 
\hline
\multicolumn{3}{c}{Figure~\ref{fg:Bill1} (Poor colocalization)} & &  $ 0.486 $ & $<0.1\%$ & &  $0.206$ & $<0.1\%$ & &  $0.123$ & $<0.1\%$ & &  $1.691$ & $85.1\%$\\
\multicolumn{3}{c}{Figure~\ref{fg:Bill2} (Good colocalization)} & &  $ 0.850 $ & $<0.1\%$ & &  $0.285$ & $<0.1\%$ & &  $0.178$ & $<0.1\%$ & &  $39.175$ & $<0.1\%$\\
\multicolumn{3}{c}{Figure~\ref{fg:Bill3} (Good colocalization)} & &  $ 0.291 $ & $<0.1\%$ & &  $0.122$ & $<0.1\%$ & &  $0.104$ & $<0.1\%$ & &  $15.798$ & $<0.1\%$\\
\hline
\hline
\end{tabular}
\caption{The colocalization measure values and corresponding $p$-value obtained by Pearson's correlation coefficient $r$, Manders' split coefficients $M_1$, $M_2$ and our new method $\tau^\ast_{\rm app}$ on microscopic images in Figure~\ref{fg:realdata} and Figure~\ref{fg:Bill}.}\label{tb:realdata}
\end{table*}
\endgroup

\begin{figure*}
    \centering
    \begin{subfigure}[b]{0.23\textwidth}
    \centering
        \begin{tikzpicture}[scale=0.9]
  \node[anchor=south west,inner sep=0] at (0,0) {\includegraphics[width=\textwidth]{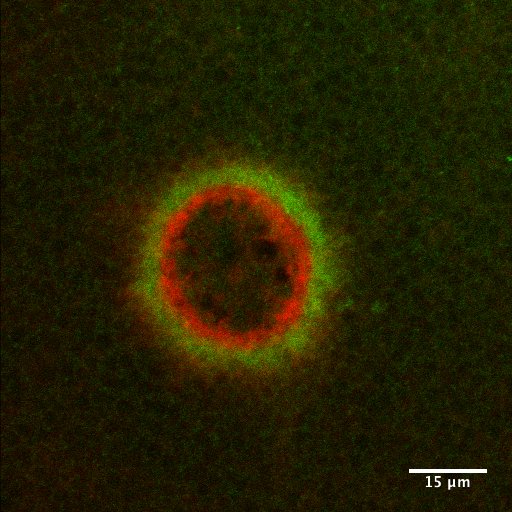}};
  \node[anchor=south west,inner sep=0] at (0,-2.5) {\includegraphics[width=\textwidth]{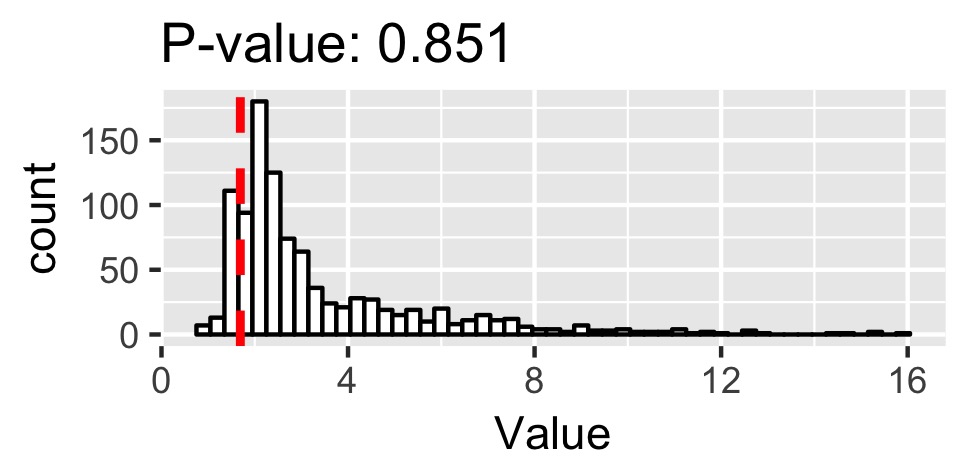}};
    \end{tikzpicture}
        \caption{Low levels of colocalization between Rho and Cdc42 are expected. (size: $512\times 512$)}
        \label{fg:Bill1}
    \end{subfigure}
    \hspace{0.05\textwidth}
    \begin{subfigure}[b]{0.23\textwidth}
    \centering
        \begin{tikzpicture}[scale=0.9]
  \node[anchor=south west,inner sep=0] at (0,0) {\includegraphics[width=\textwidth]{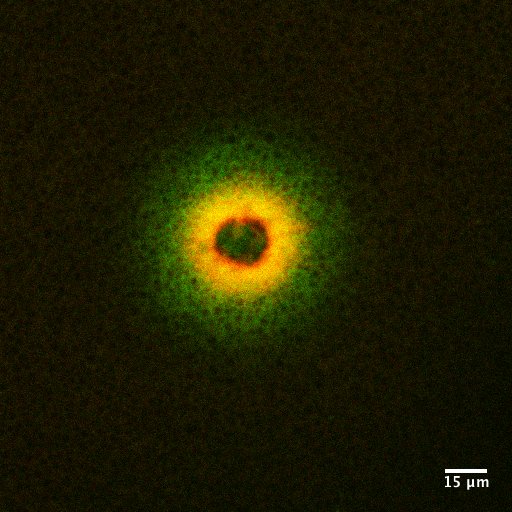}};
  \node[anchor=south west,inner sep=0] at (0,-2.5) {\includegraphics[width=\textwidth]{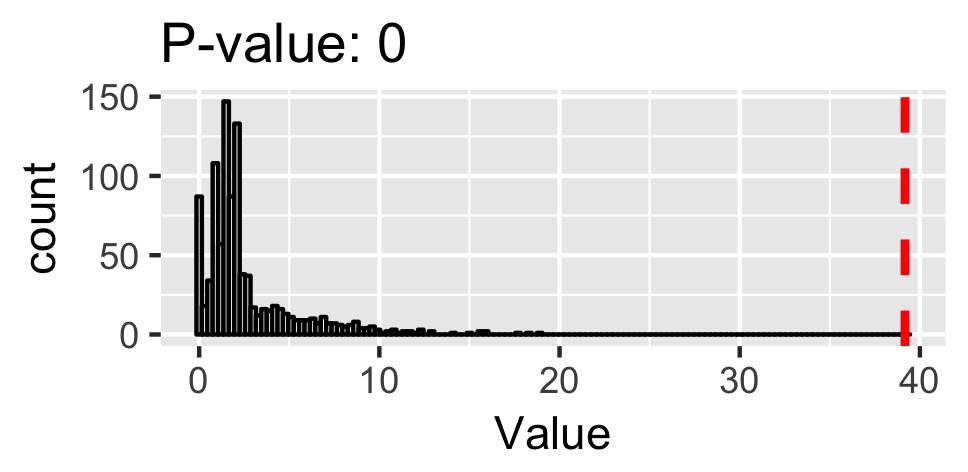}};
    \end{tikzpicture}
        \caption{High levels of colocalization between PKC$\beta$ and calcium are expected. (size: $512\times 512$)}
        \label{fg:Bill2}
    \end{subfigure}
    \hspace{0.05\textwidth}
    \begin{subfigure}[b]{0.23\textwidth}
    \centering
        \begin{tikzpicture}[scale=0.9]
  \node[anchor=south west,inner sep=0] at (0,0) {\includegraphics[width=\textwidth]{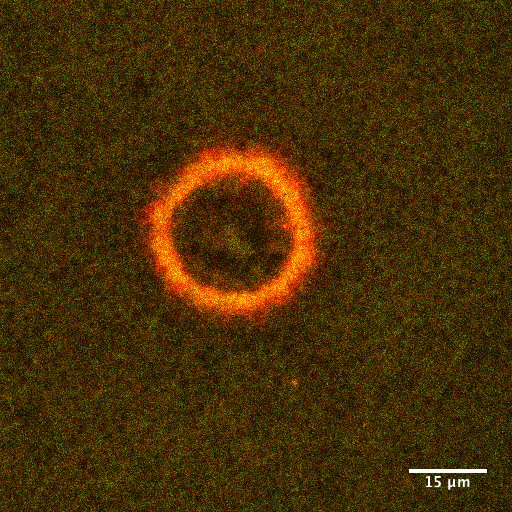}};
  \node[anchor=south west,inner sep=0] at (0,-2.5) {\includegraphics[width=\textwidth]{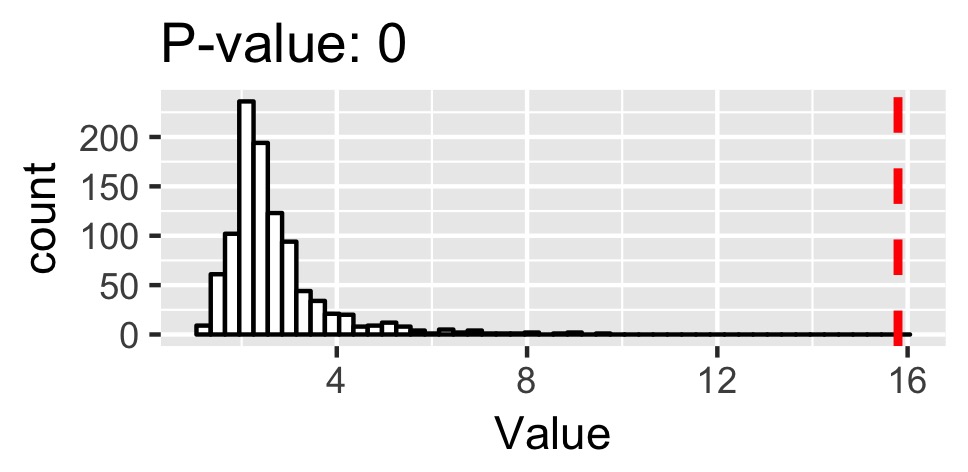}};
    \end{tikzpicture}
        \caption{High levels of colocalization between Cdc42 and cortactin are expected. (size: $512\times 512$)}
        \label{fg:Bill3}
    \end{subfigure}
    \caption{$p$-value and null distribution obtained by our new method on real data examples of wound cell.}\label{fg:Bill}
\end{figure*}

\section{Concluding Remarks}
In this paper, we propose a new robust measure of colocalization. Due to the intrinsic, nonparametric characteristic of Kendall's tau correlation coefficient, the new colocalization measure captures a wider range of associations between two channels than most existing parametric quantitative measures, such as Pearson's correlation coefficient and Manders' split coefficients. Given the vast complexites in bioimage data and variable associations between two biological probes beyond simple linear correlation or co-occurrence, our new nonparametric measure provides a more accurate reflection on the given association. Scanning at different signal levels allows our new measure to discover potential associations between two probes automatically without knowledge of thresholds for background. Under statistical hypothesis testing framework, if we assume mild regular conditions for intensity distributions, the test based on our new colocalization score is able to achieve statistical optimality. 

We also developed a user-friendly, fast algorithm for our new colocalization measure so that the colocalization score can be translated into statistical significance efficiently. To overcome the computational hurdle of scanning, we proposed an approximation of our new colocalization measure to accelerate computation. In doing so, the approximated colocalization score can be calculated much more efficiently. Furthermore, we adopted a block-wise permutation test as in \cite{costes2004automatic} to evaluate the calculated $p$-value. Putting this all into a single algorithm, users are able to get a $p$-value with a single `click'. Results from several experiments using benchmark and biological data converge to the conclusion that our new algorithm remains highly efficient.

The algorithm is readily available in an R package, RKColocal, as described previously.  This code is also currently being adapted for incorporation into ImageJ, a popular open-source bioimage analysis software package \citep[see, e.g.,][]{arena2016quantitating}. This tool and its continued development will also help bridge statistics and bioimaging by providing such improved algorithms and methods that facilitate productive collaborations between fields. 

Interdisciplinary, collaborative research can lead to more innovations and discoveries. When colocalization analyses are cast as statistical hypothesis testing problems, as shown in Section~\ref{sc:opt}, we tackled a bioimage processing problem with statistical techniques without losing perspectives from both communities. The statistical hypothesis testing framework not only helped us develop an efficient approach to detect interesting associations between probes, but also made sure that true associations between channels were always reported and false discoveries kept under control. Through a statistical lens, our new nonparametric statistical approach is ultimately trustworthy and precise. We believe the same application of statistics can also be extended to other bioimage processing techniques, including deconvolution, spectral unmixing, lifetime analyses, and more. We anticipate more collaborative benefits at the intersection of bioimage processing and statistics in the future.

\section*{Acknowledgment}

{  The authors would like to thank Nathan Sherer and Jordan Becker for sharing the microscopy image data sets in Figure~\ref{fg:realdata}, and William Bement for sharing the the microscopy image data sets in Figure~\ref{fg:Bill}. }

% if have a single appendix:
%\appendix[Proof of the Zonklar Equations]
% or
%\appendix  % for no appendix heading
% do not use \section anymore after \appendix, only \section*
% is possibly needed

% use appendices with more than one appendix
% then use \section to start each appendix
% you must declare a \section before using any
% \subsection or using \label (\appendices by itself
% starts a section numbered zero.)
%

\appendix[Proof of Theorem~\ref{thm:alter}]
\label{sc:pf}

The proof is somewhat lengthy, and we break it into several steps.

\paragraph{Size of $q_\alpha$} We first show that
$$
q_\alpha=O_p(\sqrt{\log\log n}).
$$
Recall that $q_\alpha$ is the upper $\alpha$ quantile of $\tau^\ast$ under $H_0$. It then suffices to show that there exists some universal constant $C$ such that
\begin{equation}
\label{eq:null}
\PP\left(\tau^\ast\ge C\sqrt{\log\log n}\right)\to 0.
\end{equation}

Observe that $\tau^\ast$ does not depend on the marginal distribution under $H_0$, we can assume without loss of generality that $F(x,y)=1_{(0\le x\le 1,0\le y\le 1)}$. Let $G_n^X(t)$, $G_n^Y(t)$, and $G_n(t)$ be the empirical distribution functions of $X$, $Y$, and $(X,Y)$, respectively, that is
$$
G_n^X(t)={1\over n}\sum_{i=1}^n I_{(X_i\ge t)},\qquad  G_n^Y(t)={1\over n}\sum_{i=1}^n I_{(Y_i\ge t)},
$$
and
$$
G_n(t,s)={1\over n}\sum_{i=1}^n I_{(X_i\ge t,Y_i\ge s)}.
$$
Write
$$
A_\alpha^X:=\left\{\sup_{1/2\le t< 1}{|nG_n^X(t)-n(1-t)|\over \sqrt{n(1-t)}}\le \alpha \sqrt{\log\log n}\right\},
$$
$$
A_\alpha^Y:=\left\{\sup_{1/2\le t< 1}{|nG_n^Y(t)-n(1-t)|\over \sqrt{n(1-t)}}\le \alpha \sqrt{\log\log n}\right\},
$$
and
$$
A_\alpha:=\left\{\sup_{ \substack{{\log\log n/n}\le\\ L(t,s)\le 1/2}}{|nG_n(t,s)-nL(t,s)|\over \sqrt{nL(t,s)}}\le \alpha \sqrt{\log\log n}\right\}.
$$
Hereafter, we refer $L(t,s)=(1-t)(1-s)$.
It is well known that there exists $\alpha_0>2$ such that
$$
\PP\left(A_{\alpha_0}^X\cap A_{\alpha_0}^Y\cap A_{\alpha_0}\right)\to 1,\qquad n\to \infty.
$$
See, e.g., \cite{einmahl1996}. Hence, it is sufficient to get (\ref{eq:null}) conditioned on $A_{\alpha_0}^X\cap A_{\alpha_0}^Y\cap A_{\alpha_0}$.

Recall that
$$
\tau^\ast=\max_{j,k\ge \lfloor n/2\rfloor} \tilde{\tau}(X_{(j)},Y_{(k)}),
$$
where
\begin{equation}
\label{eq:tildetau}
\tilde{\tau}(t_X,t_Y)=\tau(t_X,t_Y)\cdot \sqrt{9n_{t_X,t_Y}(n_{t_X,t_Y}-1)\over 2(2n_{t_X,t_Y}+5)}.
\end{equation}
Write
$$
\Ical_1=\left\{(j,k)\in [n]^2: j,k\ge \lfloor n/2\rfloor, nL\left({j\over n},{k\over n}\right)\le \alpha^2_0\log\log n\right\},
$$
$$
\Ical_2=\left\{(j,k)\in [n]^2: j,k\ge \lfloor n/2\rfloor, nL\left({j\over n},{k\over n}\right)> \log^2n\right\},
$$
and
\begin{equation}
\begin{split}
\Ical_3=\{& (j,k)\in [n]^2: j,k\ge \lfloor n/2\rfloor,\\
& \alpha^2_0\log\log n\le nL\left({j\over n},{k\over n}\right)\le \log^2n\}.
\end{split}
\end{equation}
It is clear that $\tau^\ast=\max\{\tau_1^\ast,\tau_2^\ast,\tau_3^\ast\}$ where
$$\tau_j^\ast=\max_{(j,k)\in \Ical_j} \tilde{\tau}(X_{(j)},Y_{(k)}).$$
It therefore suffices to upper bound $\tau_j^\ast$ separately.

We begin with $\tau_1^\ast$. Under the event $A_{\alpha_0}^X\cap A_{\alpha_0}^Y\cap A_{\alpha_0}$, we have
$$
(1-X_{(j)})(1-Y_{(k)})\le {2\alpha^2_0\log\log n \over n},
$$
when $n(1-j/n)(1-k/n)\le \alpha^2_0\log\log n$. As shown by \cite{einmahl1996},
\begin{align*}
&\PP\left(\sup_{L(t,s)\le 2\alpha^2_0\log\log n/n}nG_n(t,s)\ge m\right)\\
\le & c_1{(2c_2\alpha^2_0\log\log n)^m \over m!}\log n.
\end{align*}
Hereafter, we shall use $c$ to denote a generic positive constant that may take different values at each appearance. Taking $m=2c_2\alpha^2_0e^2\log\log n$, we can ensure
$$
\PP\left(\sup_{L(t,s)\le 2\alpha^2_0\log\log n/n}nG_n(t,s)\ge 2c_2\alpha^2_0e^2\log\log n\right)\to 0.
$$
This suggests
$$
\PP\left(\max_{(j,k)\in\Ical_1} n_{X_{(j)},Y_{(k)}}\ge 2c_2\alpha^2_0e^2\log\log n\right)\to 0.
$$
By definition of $\tilde{\tau}(t_X,t_Y)$, 
$$
\tilde{\tau}(t_X,t_Y)\le \sqrt{ 9n_{t_X,t_Y}(n_{t_X,t_Y}-1) \over 2(2n_{t_X,t_Y}+5) }.
$$
This immediately suggests
$$
\PP\left(\tau_1^\ast>{3\over 2}\sqrt{2c_2\alpha^2_0e^2\log\log n}\right)\to 0.
$$

Next, we consider $\tau_2^\ast$.
Let 
$$
\Ncal_n=\left\{(j,k):j,k\in \Scal_n \quad {\rm and}\quad nL\left({j\over n},{k\over n}\right)> \log^2n \right\}
$$
where $\Scal_n$ is
$$
\left\{s:s=\left\lfloor n-\left(1+{\sqrt{\log\log n}\over \log n}\right)^j\right\rfloor, j\in\mathbb{N}_{+}, s\ge  \lfloor n/2\rfloor \right\},
$$
and 
$$
\tau_{\Ncal_n}^\ast:=\max_{(j,k)\in \Ncal_n} \tilde{\tau}(X_{(j)},Y_{(k)}).
$$
Our strategy is to first show the difference between $\tau_{\Ncal_n}^\ast$ and $\tau_2^\ast$ is negligible and then bound $\tau_{\Ncal_n}^\ast$.

To bound $\tau_2^\ast-\tau_{\Ncal_n}^\ast$, we consider the following projection $\pi:\NN\to \NN$, which maps integer $i$ to the largest integer in $\Scal_n$ that is smaller than $i$. Conditioned on $A_{\alpha_0}^X\cap A_{\alpha_0}^Y\cap A_{\alpha_0}$,
\begin{align*}
&n_{X_{(j)},Y_{(k)}}\\
\le& nL\left(X_{(j)},Y_{(k)}\right)\left(1+{\alpha_0\sqrt{\log\log n}\over\sqrt{n(1-X_{(j)})(1-Y_{(k)})}}\right)\\
\le& nL\left({j\over n},{k\over n}\right)\left(1+{2\alpha_0\sqrt{\log\log n}\over\log n}\right)^3
\end{align*}
Therefore,
\begin{align*}
&n_{X_{(\pi(j))},Y_{(\pi(k))}}\\
&\ge nL\left(X_{(\pi(j))},Y_{(\pi(k))}\right)\left(1-{\alpha_0\sqrt{\log\log n}\over\sqrt{n(1-X_{(\pi(j))})(1-Y_{(\pi(k))})}}\right)\\
&\ge nL\left({\pi(j)\over n},{\pi(k)\over n}\right)\left(1-{2\alpha_0\sqrt{\log\log n}\over\log n}\right)^3\\
&\ge nL\left({j\over n},{k\over n}\right)\left(1-{2\alpha_0\sqrt{\log\log n}\over\log n}\right)^3\left(1+{\sqrt{\log\log n}\over\log n}\right)^{-2}
\end{align*}
This implies
\begin{equation}
\label{eq:numineq1}
n_{X_{(j)},Y_{(k)}}\le n_{X_{(\pi(j))},Y_{(\pi(k))}}\left(1+{20\alpha_0\sqrt{\log\log n}\over\log n}\right),
\end{equation}
for sufficiently large $n$. We then appeal to the following technical result.

\begin{lemma}
\label{lm:diffconcen}
Let $X$ and $Y$ be two independent uniform random variables. For two fixed pairs $(t_{X},t_{Y})$ and $(t'_{X},t'_{Y})$, denote by
$$
B=\{n_{(\min(t_{X},t'_{X}),\min(t_{Y},t'_{Y}))}\le (1+\epsilon)n_{(\max(t_{X},t'_{X}),\max(t'_{Y},t'_{Y}))}\}.
$$
Then
\begin{equation}
\label{eq:diffconcen}
\PP\left(|\tilde{\tau}(t_X,t_Y)-\tilde{\tau}(t'_X,t'_Y)|>r| B \right)\le 4\exp\left(-{r^2\over 72\epsilon^2+18\epsilon}\right),
\end{equation}
where $\tilde{\tau}(t_X,t_Y)$ is defined in (\ref{eq:tildetau}). In particular, if $\epsilon\le 1/12$, we have
$$
\PP\left(|\tilde{\tau}(t_X,t_Y)-\tilde{\tau}(t'_X,t'_Y)|>r| B \right)\le 4\exp\left(-{r^2\over 24\epsilon}\right).
$$
\end{lemma}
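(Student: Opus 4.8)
The plan is to read $\tilde\tau$ as a \emph{standardized} Kendall's tau and to exploit that the two regions being compared differ only by a thin annulus. Write $S(\Kcal)=\sum_{i\neq j\in\Kcal}\sign(X_i-X_j)\sign(Y_i-Y_j)$ and $c_N=3/\sqrt{2(2N+5)N(N-1)}$, so that $\tilde\tau(\Kcal)=c_{|\Kcal|}S(\Kcal)$ by (\ref{eq:tildetau}). Since $\Kcal_0:=\Kcal(\max(t_X,t'_X),\max(t_Y,t'_Y))$ is contained in both $\Kcal(t_X,t_Y)$ and $\Kcal(t'_X,t'_Y)$, and on $B$ the largest of the four regions has at most $(1+\epsilon)$ times as many points as this smallest one, a triangle inequality routed through the common sub-region reduces the claim to the nested case: it suffices to bound $|\tilde\tau(\Kcal)-\tilde\tau(\Kcal_0)|$ for $\Kcal_0\subseteq\Kcal$ with $m:=|\Kcal_0|$, $M:=|\Kcal|$ and $M\le(1+\epsilon)m$, at the cost of the factor $4=2\times2$ in front of (\ref{eq:diffconcen}). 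Because the statement concerns independent uniform $X,Y$ only, I will work under this product law throughout.

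First I would split
\begin{equation}
\tilde\tau(\Kcal)-\tilde\tau(\Kcal_0)=\underbrace{c_M\big(S(\Kcal)-S(\Kcal_0)\big)}_{\mathrm{(I)}}+\underbrace{(c_M-c_m)S(\Kcal_0)}_{\mathrm{(II)}},
\end{equation}
separating the two error sources, a changing numerator and a changing normalization. Term (II) is the easy one: from $m\le M\le(1+\epsilon)m$ one checks $|c_M/c_m-1|=O(\epsilon)$, and $(c_M-c_m)S(\Kcal_0)=(c_M/c_m-1)\,\tilde\tau(\Kcal_0)$, where $\tilde\tau(\Kcal_0)$ is a bounded, mean-zero standardized U-statistic, hence sub-Gaussian with $O(1)$ variance proxy by Hoeffding's inequality. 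This gives a Gaussian tail for (II) with variance proxy of order $\epsilon^2$, the source of the $72\epsilon^2$ term.

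The work is in term (I), and the key is a cancellation. I would condition on the counts $M,m$ only---\emph{not} on the positions---so that the $M$ points are i.i.d.\ uniform on the corner $[t_X,1]\times[t_Y,1]$, the sub-region $\Kcal_0$ collects those additionally exceeding the larger thresholds, and, crucially, $\EE[S(\Kcal)]=\EE[S(\Kcal_0)]=0$ by the symmetry of a product law, so (I) concentrates around $0$ rather than around a spurious mean. Viewing (I) as a function of these $M$ independent points, perturbing a point lying in $\Kcal_0$ changes $S(\Kcal)$ and $S(\Kcal_0)$ through \emph{the same} interactions, except for its $O(\epsilon m)$ interactions with the annulus $\Kcal\setminus\Kcal_0$; its sensitivity is therefore only $O(\epsilon M^{-1/2})$. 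Only the $\le\epsilon m$ annulus points, and the $O(\epsilon)$-probability events that a resampled corner point crosses the inner threshold, carry the larger sensitivity $O(M^{-1/2})$. Summing squared sensitivities (an Efron--Stein bookkeeping) yields a conditional variance of order $\epsilon$, with a lower-order $\epsilon^2$ piece from the bulk, and a variance-sensitive bounded-difference inequality of Bernstein type converts this into the tail $\exp(-r^2/O(\epsilon))$ matching the $18\epsilon$ term.

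The hard part will be precisely this last step. The worst-case per-coordinate difference is $O(M^{-1/2})$, so \emph{plain} McDiarmid only delivers an $O(1)$ variance proxy and loses the essential factor $\epsilon$; one must instead use a bound driven by the (small) variance rather than by the (moderate) worst-case difference, and one must keep the conditioning at the level of the counts---tying it to the event $B$---so as not to reintroduce the nonzero conditional mean that appears once individual point positions are fixed. Verifying the cancellation estimate for corner points and assembling the Efron--Stein variance into the stated constants $72\epsilon^2+18\epsilon$ (and, for $\epsilon\le1/12$, the cleaner $24\epsilon$, since then $72\epsilon^2\le 6\epsilon$) is the remaining bookkeeping.
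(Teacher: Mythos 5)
Your overall route is the one the paper takes: reduce the general case to nested regions by a triangle inequality through the intersection $\Kcal(\max(t_X,t'_X),\max(t_Y,t'_Y))$ (this is where the factor $4$ and the $r\mapsto r/2$ rescaling of the exponent come from), condition on the counts, and then observe that in the difference of the two standardized Kendall sums an inner point's interactions with other inner points cancel, so only its $O(\epsilon m)$ interactions with the annulus survive, while the $\le\epsilon m$ annulus points carry the full $O(M^{-1/2})$ sensitivity. Your sensitivity accounting (bulk contributes $O(\epsilon^2)$, annulus contributes $O(\epsilon)$) is exactly the paper's computation of the $\Delta_j$'s.

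The genuine misstep is your diagnosis of the ``hard part.'' You claim that plain McDiarmid ``only delivers an $O(1)$ variance proxy and loses the essential factor $\epsilon$'' because the worst-case per-coordinate difference is $O(M^{-1/2})$. That would be true only if you bounded $\sum_j c_j^2$ by $M\cdot\max_j c_j^2$; McDiarmid's inequality is stated with individual coordinate constants, and here $\sum_j c_j^2 = m\cdot O(\epsilon^2/m) + \epsilon m\cdot O(1/m) = O(\epsilon^2+\epsilon)$ precisely because only $\le\epsilon m$ coordinates carry the large difference. The paper applies McDiarmid directly with $v=\tfrac14\sum_j\Delta_j^2\le 9\epsilon^2+\tfrac94\epsilon$ and is done. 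Your proposed substitute is both unnecessary and not executable as written: Efron--Stein yields only a variance bound, and upgrading a variance bound to an exponential tail requires additional structure (self-bounding, entropy method, or control of conditional variances), none of which you supply. A secondary issue: conditioning ``on the counts only'' does not make the $M$ points i.i.d.\ uniform on the corner once $m$ is also fixed, and allowing a resampled point to cross the inner threshold makes $m$ (hence the normalization $c_m$) change with a single coordinate. The paper's stratified representation --- $m$ i.i.d.\ points in the inner region and $M-m$ i.i.d.\ points in the annulus, memberships fixed --- keeps the coordinates independent, keeps $\EE f=0$, and removes exactly the complication you flag about ``crossing'' events. With that representation and coordinate-wise McDiarmid, your sketch closes without any new concentration tool.
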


Lemma~\ref{lm:diffconcen} immediately suggests that 
\begin{align*}
&\PP\left(|\tilde{\tau}(X_{(j)},Y_{(k)})-\tilde{\tau}(X_{(\pi(j))},Y_{(\pi(k))})|>r\right)\\
%=&\PP\left(|\tilde{\tau}(X_{(j)},Y_{(k)})-\tilde{\tau}(X_{(\pi(j))},Y_{(\pi(k))})|>r\middle| n_{X_{(j)},Y_{(k)}}\le n_{X_{(\pi(j))},Y_{(\pi(k))}}\left(1+{20\alpha_0\sqrt{\log\log n}\over\log n}\right) \right)\\
%\le& \sup_{j,k}\PP\left(|\tilde{\tau}(X_{(j)},Y_{(k)})-\tilde{\tau}(X_{(\pi(j))},Y_{(\pi(k))})|>r\middle| n_{X_{(j)},Y_{(k)}}\le n_{X_{(\pi(j))},Y_{(\pi(k))}}\left(1+{20\alpha_0\sqrt{\log\log n}\over\log n}\right) \right)\\
\le& 4\exp\left(-{r^2\log n\over 480\alpha_0\sqrt{\log\log n}}\right).
\end{align*}
Because there are at most $n^2$ pairs $(j,k)$ and $\log^4 n$ pairs $(\pi(j),\pi(k))$, an application of union bound yields
$$
\PP\left(|\tau_{\Ncal_n}^\ast-\tau_2^\ast|>r\right)\le 4n^2 \log^4 n \exp\left(-{r^2\log n\over 480\alpha_0\sqrt{\log\log n}}\right).
$$
Taking $r=2\sqrt{480\alpha_0}\log^{1/4}\log n$ yields
\begin{equation}
\label{eq:taudiff}
\PP\left(|\tau_{\Ncal_n}^\ast-\tau_2^\ast|>2\sqrt{480\alpha_0}\log^{1/4}\log n\right)\to 0.
\end{equation}

On the other hand, to bound $\tau_{\Ncal_n}^\ast$, we now appeal to the following lemma. 
\begin{lemma}
\label{lm:singleconcen}
If $X$ and $Y$ are independent uniform random variables, then
\begin{equation}
\label{eq:singleconcen}
\PP\left(\tilde{\tau}(t_X,t_Y)>r|n_{t_X,t_Y}\ge 2 \right)\le \exp\left(-{r^2\over 9}\right),
\end{equation}
where $\tilde{\tau}(t_X,t_Y)$ is defined in (\ref{eq:tildetau}).
\end{lemma}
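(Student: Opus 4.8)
The plan is to condition on the value of $n_{t_X,t_Y}$ and reduce the problem to a concentration inequality for Kendall's tau of a uniformly random permutation. Under $H_0$ we may take $X,Y$ to be independent and uniform, so that conditional on $\{n_{t_X,t_Y}=m\}$ the pairs $(X_i,Y_i)$ falling in $\Kcal(t_X,t_Y)$ are i.i.d.\ uniform on $[t_X,1]\times[t_Y,1]$ with independent coordinates. Since $\tilde{\tau}(t_X,t_Y)$ depends on the data only through ranks, its conditional law given $\{n_{t_X,t_Y}=m\}$ coincides with that of the standardized Kendall statistic of a uniformly random permutation $\sigma$ of $\{1,\dots,m\}$. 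It therefore suffices to establish the bound $\exp(-r^2/9)$ for each fixed $m\ge 2$ and then average over the conditional distribution of $n_{t_X,t_Y}$; because the bound will be uniform in $m$, this averaging is immediate.

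Next I would rewrite the statistic as a sum of independent, bounded, mean-zero increments. Writing $K=\sum_{i<j}\sign(X_i-X_j)\sign(Y_i-Y_j)$ over the $m$ points in $\Kcal(t_X,t_Y)$, one has $\tau(t_X,t_Y)=2K/[m(m-1)]$, and sorting by the $X$-ranks gives $K=\binom{m}{2}-2I$, where $I$ is the number of inversions of $\sigma$. The key combinatorial fact is the Lehmer-code (inversion-table) decomposition: if $b_i$ counts the indices $j<i$ with $\sigma(j)>\sigma(i)$, then $b_2,\dots,b_m$ are independent with $b_i$ uniform on $\{0,1,\dots,i-1\}$ and $I=\sum_{i=2}^m b_i$. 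Consequently $K=\sum_{i=2}^m W_i$ with $W_i:=(i-1)-2b_i$ independent, $\EE W_i=0$, and $W_i$ supported in $[-(i-1),\,i-1]$.

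With this representation in hand the conclusion follows from Hoeffding's inequality. A direct computation shows that $\{\tilde{\tau}(t_X,t_Y)>r\}$ is exactly the event $\{K>t\}$ with $t=\tfrac{r}{6}\sqrt{2m(m-1)(2m+5)}$. Applying Hoeffding to $K=\sum_{i=2}^m W_i$, whose increments have squared ranges summing to $\sum_{i=2}^m (2(i-1))^2=\tfrac{2}{3}m(m-1)(2m-1)$, gives
$$
\PP(K\ge t)\le \exp\!\left(-\frac{3t^2}{m(m-1)(2m-1)}\right)=\exp\!\left(-\frac{r^2(2m+5)}{6(2m-1)}\right).
$$
Since $(2m+5)/(2m-1)\ge 1$ for every $m\ge 1$, the exponent is at least $r^2/6\ge r^2/9$, uniformly in $m\ge 2$, which yields (\ref{eq:singleconcen}) after removing the conditioning on $\{n_{t_X,t_Y}=m\}$.

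The routine parts are the algebraic threshold translation and the variance-proxy computation; the step that carries the real content is the reduction to a uniformly random permutation together with the inversion-table decomposition, which is what turns a sum of strongly dependent pairwise sign terms into a sum of genuinely independent bounded summands amenable to Hoeffding. The only point requiring care is justifying that conditioning on $\{n_{t_X,t_Y}=m\}$ leaves the within-$\Kcal$ configuration rank-uniform, so that $\sigma$ is exactly uniform; once this is in place the constant $9$ is comfortably met (indeed $6$ would already suffice), leaving slack for the bound to hold for all $m\ge 2$.
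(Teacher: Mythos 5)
Your proof is correct, but the key concentration step is genuinely different from the paper's. The paper also conditions on $n_{t_X,t_Y}=k$, notes that $\tilde{\tau}(t_X,t_Y)$ is then distributed as the normalized Kendall statistic of $k$ i.i.d.\ points with independent coordinates, and bounds the tail by applying Hoeffding's 1963 concentration inequality for U-statistics (with bounded kernel $\sign(\cdot)\sign(\cdot)\in[-1,1]$) directly, which yields the exponent $-\lfloor k/2\rfloor r^2(2k+5)/\bigl(9k(k-1)\bigr)\le -r^2/9$. You instead exploit the special structure of Kendall's tau under exact rank-uniformity: the inversion-table (Lehmer-code) decomposition turns $K=\binom{m}{2}-2I$ into a sum of \emph{genuinely independent} bounded summands $W_i=(i-1)-2b_i$, so the elementary Hoeffding inequality for independent sums applies and gives the sharper exponent $-r^2(2m+5)/\bigl(6(2m-1)\bigr)\le -r^2/6$. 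Your route is more self-contained (no U-statistic machinery) and wins a better constant, but it leans on the permutation being exactly uniform, i.e.\ on the within-$\Kcal$ coordinates being independent with continuous marginals — which the lemma's hypothesis supplies, and which you correctly flag as the point needing justification. The paper's U-statistic route is less structure-specific and would extend to other bounded symmetric kernels. Both arguments handle the conditioning on $\{n_{t_X,t_Y}\ge 2\}$ the same way, by taking a supremum of the fixed-$m$ bound over $m\ge 2$.
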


An application of union bounds and Lemma~\ref{lm:singleconcen} yields
\begin{align*}
\PP\left(\tau_{\Ncal_n}^\ast>r\right)\le & \log^4 n \sup_{(j,k)\in \Ncal_n} \PP\left(\tilde{\tau}(X_{(j)},Y_{(k)})>r\right)\\
\le &\log^4 n\sup_{j,k } \PP\left(\tilde{\tau}(X_{(j)},Y_{(k)})>r\right)\\
\le & \log^4 n\sup_{j,k } \PP\left(\tilde{\tau}(X_{(j)},Y_{(k)})>r|n_{X_{(j)},Y_{(k)}}\ge 2\right)\\
\le & \log^4 n \exp\left(-{r^2\over 9}\right)
\end{align*}
Taking $r=7\sqrt{\log\log n}$ leads to
\begin{equation}
\label{eq:tausn}
\PP\left(\tau_{\Ncal_n}^\ast>7\sqrt{\log\log n}\right)\to 0.
\end{equation}
Combined with (\ref{eq:taudiff}) and (\ref{eq:tausn}), we obtain
$$
\PP\left(\tau_2^\ast>8\sqrt{\log\log n}\right)\to 0.
$$

Finally, we consider $\tau_3^\ast$, which turns out to be the most complex. We first group $(j,k)$ according to its size by defining the following collection of coordinates:
$$
\Tcal_n(M_i,\eta_i)=\{(j,k):M_i/\eta_i\le n(1-j/n)(1-k/n)\le M_i\}, 
$$
where $1\le i\le Q_n$.
Here,  $Q_n$ is the smallest integer such that $M_{Q_n+1}<\alpha^2_0\log\log n$, and $\eta_i$ and $M_i$ be two positive sequences such that 
$$
M_1=\log^2 n,\quad \eta_i=1+{\sqrt{\log\log n}\over \sqrt{M_i}}\quad{\rm and}\quad M_{i+1}=M_i/\eta_i.
$$
It is not hard to see that
$$
\tau_3^\ast=\max_{1\le i\le Q_n} \max_{(j,k)\in \Tcal_n(M_i,\eta_i)}\tilde{\tau}(X_{(j)},Y_{(k)}).
$$
We employ a strategy similar to the previous case to bound $\max_{(j,k)\in \Tcal_n(M_i,\eta_i)}\tilde{\tau}(X_{(j)},Y_{(k)})$ for each $i$. To this end, we define the following approximation set to $\Tcal_n(M_i,\eta_i)$:
\begin{align*}
&\Scal_n(M_i,\eta_i)=\{(j,k):(j,k){\rm\ or \ }(k,j) {\rm\ is\ of\ form}\\
&\left(\lfloor n-n/2\eta_i^u\rfloor,\lfloor n-2M_i\eta_i^{u+1}\rfloor\right),u=0,\ldots, \left\lfloor{\log n\over 2\log \eta_i}\right\rfloor\}.
\end{align*}

We first bound to the difference between max on $\Scal_n(M_i,\eta_i)$ and on $\Tcal_n(M_i,\eta_i)$. Similar to before, we consider a class of maps $\pi_{i}:\Tcal_n(M_i,\eta_i)\to\Scal_n(M_i,\eta_i)$ such that
$$
\pi_{i}(j,k)=\begin{cases}\left(\lfloor n-n/2\eta_i^{h(j)}\rfloor,\lfloor n-2M_i\eta_i^{h(j)+1}\rfloor\right) &{\rm if}\ j>k \\ \left(\lfloor n-2M_i\eta_i^{h(k)+1}\rfloor,\lfloor n-n/2\eta_i^{h(k)}\rfloor\right) &{\rm if}\ j\le k \end{cases}
$$
where $h(t)$ maps $t$ to the largest $u$ such that $\lfloor n-n/2\eta_i^u\rfloor<t$.
For any $(j,k)\in \Scal_n(M_i,\eta_i)$, its pre-image $\pi^{-1}_i(j,k)$ is the collection of all pairs $(j',k')$ in $\Tcal_n(M_i,\eta_i)$ which satisfy $\pi_i(j',k')=(j,k)$. Moreover, we define a conjugate pair of $(j,k)\in \Scal_n(M_i,\eta_i)$ as
$$
c(j,k)=\begin{cases}\left(\lfloor n-n/2\eta_i^{u+1}\rfloor,\lfloor n-2M_i\eta_i^{u}\rfloor\right) &{\rm if}\ j>k \\ \left(\lfloor n-2M_i\eta_i^{u}\rfloor,\lfloor n-n/2\eta_i^{u+1}\rfloor\right) &{\rm if}\ j\le k \end{cases}
$$
if $u$ is an integer such that $(j,k){\rm\ or \ }(k,j)=\left(\lfloor n-n/2\eta_i^u\rfloor, \lfloor n-2M_i\eta_i^{u+1}\rfloor\right)$. For simplicity, denote by $c(j,k)_x$ and $c(j,k)_y$ the two indices of $c(j,k)$.
Figure \ref{fg:map} gives a specific example to illustrate the idea behinf $\pi_i$, $\pi^{-1}_i(j,k)$ and conjugate pair $c(j,k)$.

As shown in Figure~\ref{fg:map}, for any $(j,k)\in \Scal_n(M_i,\eta_i)$, 
$$
\bigcup_{(j',k')\in \pi^{-1}_i(j,k)}\Kcal(X_{(j')},Y_{(k')})\subset \Kcal(X_{(j)},Y_{(k)}),
$$
where, recall that $\Kcal(t_X,t_Y)=\left\{i\in \II:X_i\ge t_X,Y_i\ge t_Y\right\}$. This suggests that 
\begin{equation}
\label{eq:numineq}
\begin{split}
&\left|\left\{\tilde{\tau}(X_{(j')},Y_{(k')}):(j',k')\in \pi^{-1}_i(j,k)\right\}\right|\\
\le &\left(n_{X_{(j)},Y_{(k)}}-n_{X_{(c(j,k)_x)},Y_{(c(j,k)_y)}}\right)^2.
\end{split}
\end{equation}
Here $|\cdot|$ represents the cardinality of a set. By a similar argument as that for (\ref{eq:numineq1}), we have 
\begin{equation}
\label{eq:numminus}
n_{X_{(j)},Y_{(k)}}\le n_{X_{(c(j,k)_x)},Y_{(c(j,k)_y)}}\left(1+{20\alpha_0\sqrt{\log\log n}\over \sqrt{M_i}}\right),
\end{equation}
for any $(j,k)\in \Scal_n(M_i,\eta_i)$.

Equations (\ref{eq:numineq}) and (\ref{eq:numminus}) together imply that
\begin{align*}
&\left|\left\{\tilde{\tau}(X_{(j')},Y_{(k')}):(j',k')\in \pi^{-1}_i(j,k)\right\}\right|\\
\le &(20\alpha_0n_{X_{(j)},Y_{(k)}}\sqrt{\log\log n}/\sqrt{M_i})^2\\
\le  &\log^3 n.
\end{align*}
This means that the number of distinct values among $\{\tilde{\tau}(X_{(j')},Y_{(k')}): (j',k')\in \pi^{-1}_i(j,k)\}$ is not very large. We can then apply union bound, (\ref{eq:numminus}) and Lemma~\ref{lm:diffconcen} to get
\begin{align*}
&\PP\left(\left|\max_{\substack{(j,k)\in\\  \Tcal_n(M_i,\eta_i)}} \tilde{\tau}(X_{(j)},Y_{(k)})-\max_{\substack{(j,k)\in\\  \Scal_n(M_i,\eta_i)}} \tilde{\tau}(X_{(j)},Y_{(k)})\right|>r\right)\\
\le & \sum_{\substack{(j,k)\in\\ \Scal_n(M_i,\eta_i)}}\PP\left(\max_{\substack{(j',k')\in\\ \pi^{-1}_i(j,k) }}\left|\tilde{\tau}(X_{(j)},Y_{(k)})-\tilde{\tau}(X_{(j')},Y_{(k')})\right|>r\right)\\
\le & \sum_{(j,k)\in \Scal_n(M_i,\eta_i)} 4\log^3 n \exp\left(-{r^2\over 24(\eta_i-1)}\right)\\
\le & 4\log^5 n \exp\left(-{r^2\over 24(\eta_i-1)}\right).
\end{align*}

Recall that 
$$
Q_n\le T_1\le \log^2 n\quad {\rm and}\quad \eta_i\le 1+{1\over \alpha_0}.
$$
Therefore,
\begin{equation}
\label{eq:alldiff}
\begin{split}
&\PP\left(\left|\max_{\substack{(j,k)\in \\ \cup_{i=1}^{Q_n}\Tcal_n(M_i,\eta_i)}} \tilde{\tau}(X_{(j)},Y_{(k)})-\max_{\substack{(j,k)\in \\ \cup_{i=1}^{Q_n}\Scal_n(M_i,\eta_i)}} \tilde{\tau}(X_{(j)},Y_{(k)})\right|>r\right)\\
\le &\sum_{i=1}^{Q_n}\PP\left(\left|\max_{\substack{(j,k)\in \\ \Tcal_n(M_i,\eta_i)}} \tilde{\tau}(X_{(j)},Y_{(k)})-\max_{\substack{(j,k)\in\\  \Scal_n(M_i,\eta_i)}} \tilde{\tau}(X_{(j)},Y_{(k)})\right|>r\right)\\
\le & 4\log^7 n \exp\left(-{\alpha_0 r^2\over 24}\right)
\end{split}
\end{equation}
It is clear that
\begin{align*}
&\left\{(j,k):j,k\ge \lfloor n/2\rfloor,{\alpha^2_0\log\log n\over n}\le L\left({j\over n},{k\over n}\right)\le {\log^2n\over n} \right\}\\
&\subset \bigcup_{i=1}^{Q_n} \Tcal_n(M_i,\eta_i)
\end{align*}
Taking $r=16\sqrt{\log\log n}/\alpha_0$ in (\ref{eq:alldiff}) yields
\begin{equation}
\label{eq:diff3}
\PP\left(\left|\tau_3^\ast-\max_{\substack{(j,k)\in\\  \cup_{i=1}^{Q_n}\Scal_n(M_i,\eta_i)}} \tilde{\tau}(X_{(j)},Y_{(k)})\right|>{16\sqrt{\log\log n}\over\alpha_0}\right)\to 0
\end{equation}

\begin{figure}[htbp]
\begin{center}
\includegraphics[width=7cm]{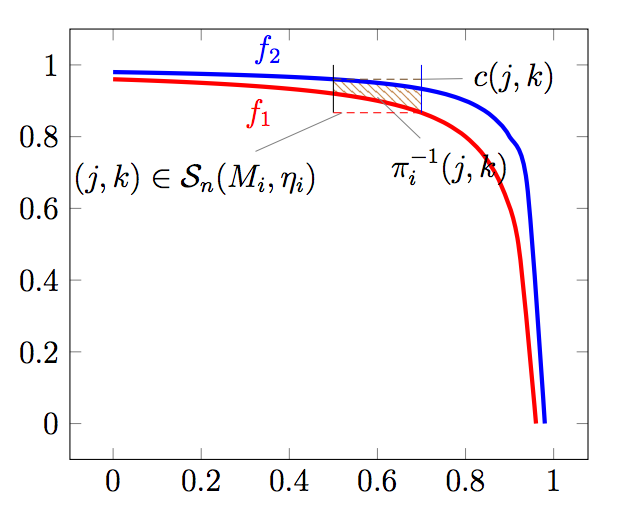}
\end{center}
\caption{$f_1:n(1-j/n)(1-k/n)= M_i$ and $f_2:n(1-j/n)(1-k/n)= M_i/\eta_i$, The shadded area is  $\pi^{-1}_i(j,k)$ for some $(j,k)\in \Scal_n(M_i,\eta_i)$.}
\label{fg:map}
\end{figure}

An application of union bound and Lemma~\ref{lm:singleconcen} then yields
\begin{align*}
&\PP\left(\max_{(j,k)\in  \cup_{i=1}^{Q_n}\Scal_n(M_i,\eta_i)} \tilde{\tau}(X_{(j)},Y_{(k)})>r\right)\\
\le& \sum_{i=1}^{Q_n}\PP\left(\max_{(j,k)\in \Scal_n(M_i,\eta_i)} \tilde{\tau}(X_{(j)},Y_{(k)})>r\right)\\
\le& \log^4 n \sup_{(j,k)\in  \cup_{i=1}^{Q_n}\Scal_n(M_i,\eta_i)}\PP\left(\tilde{\tau}(X_{(j)},Y_{(k)})>r\right)\\
\le& \log^4 n \exp\left(-{r^2\over 9}\right)
\end{align*}
Taking $r=7\sqrt{\log\log n}$ yields
\begin{equation}
\label{eq:union3}
\PP\left(\max_{(j,k)\in  \cup_{i=1}^{Q_n}\Scal_n(M_i,\eta_i)} \tilde{\tau}(X_{(j)},Y_{(k)})>7\sqrt{\log\log n}\right)\to 0
\end{equation}
Together with (\ref{eq:diff3}), it implies that
$$
\PP\left(\tau_3^\ast>7\sqrt{\log\log n}+{16\sqrt{\log\log n}\over \alpha_0}\right)\to 0.
$$
and we can complete proof. The statement about $q_\alpha$ then follows from the bounds we derived for $\tau_1^\ast$, $\tau_2^\ast$ and $\tau_3^\ast$.

\paragraph{Type II eror.} To prove the first statement, it now suffices to show that under $H_0$, if
$$
V(t_X,t_Y)\cdot T^2(t_X,t_Y) \gg {\log\log n\over n},
$$
then $\tau^\ast\gg \sqrt{\log\log n}$.

Recall that $\Kcal=\{i:X_i>t_X,\ Y_i>t_Y\}$, and
$$
\EE(\sign(X_i-X_j)\sign(Y_i-Y_j)|i,j\in \Kcal)=T(t_X,t_Y).
$$
This suggests that
$$
\EE\left(\tau(t_X,t_Y)\cdot W(n_{t_X,t_Y}) \middle|n_{t_X,t_Y}=k\right)\ge {3\over 4}\sqrt{k}T(t_X,t_Y).
$$
Here, $W(x):=\sqrt{9x(x-1)\over 2(2x+5)}$.
Clearly, $n_{t_X,t_Y}$ follows binomial distribution $n_{t_X,t_Y}\sim Bin(n,\theta)$, where $\theta=1+F(t_X,t_Y)-F_X(t_X)-F_Y(t_Y)$. It is easy to derive from Chernoff's bounds that
$$
\PP\left({n\theta\over 2}<n_{t_X,t_Y}<{3n\theta\over 2}\right)\to 1.
$$
This implies that
$$
\PP\left(\tau(t_X,t_Y)\cdot W(n_{t_X,t_Y})>\sqrt{9 \over 32}\sqrt{n\theta}T(t_X,t_Y)\right)\to 1.
$$
and hence
$$
\PP\left(\tau(t_X,t_Y)\cdot W(n_{t_X,t_Y})\gg \sqrt{\log\log n}\right)\to 1.
$$
It follows that
$$
\tau^\ast\ge \tau(t_X,t_Y)\cdot \sqrt{9n_{t_X,t_Y}(n_{t_X,t_Y}-1)\over 2(2n_{t_X,t_Y}+5)}\gg\sqrt{\log\log n}),
$$
with probability tending to one.

\paragraph{Lower bound.} To show that we can not detect a signal under the condition (\ref{eq:lowercond}), we consider a special case where under the null, $F_0(x,y)=xy\II(0\le x\le 1,0\le y\le 1)$; and under the alternative, the joint distribution comes from variants of Farlie-Gumbel-Morgenstern family so that its density can be given by 
\begin{align*}
&f_{(U_j,\gamma_j)}\\
=&{dF_{(U,\gamma)}(x,y)\over dxdy}\\
=&\begin{cases}1+\gamma\left(1-2\left(x-U\over 1-U \right)\right)\left(1-2\left(y-U\over 1-U \right)\right) &  U\le x,y\le 1 \\1 & {\rm otherwise} \end{cases},
\end{align*}
where $U$ and $\gamma$ is some constant between $0$ and $1$.

Let $(U_j,\gamma_j)$ be a sequence of pairs such that $U_j=1-1/2^j$ and $\gamma_j(1-U_j)=2\sqrt{\log\log n}/\sqrt{n}$ and let $M_n=\lfloor\log n/4 \rfloor$.
It is not hard to verify that $F_{(U_j,\gamma_j)}$s satisfy (\ref{eq:lowercond}) with $c=16/81$ by noting
$$
V(U_j,U_j)\cdot T^2(U_j,U_j)={4\gamma_j^2\over 81}(1-U_j)^2.
$$
Denote by $\PP_0$ the joint distribution of $(X_i,Y_i)_{i=1}^{n}$ with distribution $F_0(x,y)$ and, for $j=1,\ldots,M_n$, $\PP_j$ joint distribution of $(X_i,Y_i)_{i=1}^{n}$ with density distribution of $F_{(U_j,\tau_j)}(x,y)$.
Then, the likelihood ratio between $\PP_j$ and $\PP_0$ is
$$
L_j:={d\PP_j \over d\PP_0}=\prod_{i=1}^n f_{(U_j,\gamma_j)}(X_i,Y_i).
$$
Elementary calculations lead to
\begin{equation}
\label{eq:crossproduct}
\begin{split}
&\EE_{0}(f_{(U_j,\gamma_j)}(X_i,Y_i)f_{(U_k,\gamma_k)}(X_i,Y_i))\\
=&1+{\gamma_j(1-U_j)\gamma_k(1-U_k)\over 9*2^{3|j-k|}}\\
=&1+{4\log\log n\over 9n*2^{3|j-k|}},
\end{split}
\end{equation}
where $\EE_0$ stands s for expectation taken with respect to $\PP_0$.
By definition and (\ref{eq:crossproduct}), we can ensure
\begin{align*}
\EE_{0}(L_jL_k)-1&=\prod_{i=1}^n \EE_{0}(f_{(U_j,\gamma_j)}(X_i,Y_i)f_{(U_k,\gamma_k)}(X_i,Y_i))-1\\
&=\left(1+{4\log\log n\over 9n\times 2^{3|j-k|}}\right)^n-1\\
&\le {1\over 2^{3|j-k|}}\left(\left(1+{4\log\log n\over 9n}\right)^n-1\right)\\
&\le {1\over 2^{3|j-k|}}\left(\exp\left({4\log\log n\over 9}\right)-1\right).
\end{align*}
This immediately suggests that
$$
\sum_{j,k=1}^{M_n}\left(\EE_{0}(L_jL_k)-1\right)\le 2M_n \exp\left({4\log\log n\over 9}\right).
$$
Then, by Jensen's inequality, we have
\begin{equation}
\label{eq:converge}
\begin{split}
&\EE_{0}\left(\left|{1\over M_n}\sum_{j=1}^{M_n} L_j-1\right|\right)^2\\
\le& \EE_{0}\left(\left|{1\over M_n}\sum_{j=1}^{M_n} L_j-1\right|^2\right)\\
\le& 2M_n^{-1} \exp\left({4\log\log n\over 9}\right)\to 0
\end{split}
\end{equation}
Let $\phi$ be any test that depends on $\{X_i,Y_i\}_{i=1}^n$. Then, by (\ref{eq:converge}), 
\begin{align*}
&\max_{j=1,\ldots,M_n}\PP_j(\phi=0)+\PP_0(\phi=1)\\
=&1-\min_{j=1,\ldots,M_n}\EE_{j}\phi+\EE_0(\phi)\\
\ge& 1-{1\over M_n}\sum_{j=1}^{M_n}\EE_{j}\phi+\EE_{0}\phi\\
=&1-\EE_{0}\left({1\over M_n}\sum_{j=1}^{M_n}L_j-1\right)\phi\\
\ge& 1-\EE_{0}\left|{1\over M_n}\sum_{j=1}^{M_n}L_j-1\right|\\
\to& 1
\end{align*}
Here $\EE_j$ stands s for expectation taken with respect to $\PP_j$; we complete the proof.

\appendix[Proof of Technical Lemmas]
\begin{proof}[Proof of Lemma \ref{lm:diffconcen}]
First consider a simple case where $(t_{X}-t'_{X})(t'_{Y}-t'_{Y})=0$ and assume $t_{X}=t'_{X}$ and $t_{Y}\le t'_{Y}$ without loss of generality. Let $S=\{(t,s):t_{X}\le t\le 1,t_{Y}\le s\le 1\}$ and $S'=\{(t,s):t_{X}\le t\le 1,t'_{Y}\le s\le 1\}$. 
We randomly choose points $(\tilde{X}_i,\tilde{Y_i})_{i=1}^{k'}$ in $S'$ and points $(\tilde{X}_i,\tilde{Y_i})_{i=k'+1}^{k}$ in $S\setminus S'$, where $k\le (1+\epsilon)k'$.
Condition on $\{n_{t_X,t_Y}=k,n_{t'_X,t'_Y}=k'\}$, $\tilde{\tau}(t_X,t_Y)-\tilde{\tau}(t'_X,t'_Y)$ has the same distribution as the following statistic:
\begin{align*}
& f((\tilde{X}_i,\tilde{Y}_i)_{i=1}^{k})\\
=&{3\sqrt{2}\over \sqrt{k(k-1)(2k+5)}}\sum_{1\le i<j\le k}\sign(\tilde{X}_i-\tilde{X}_j)\sign(\tilde{Y}_i-\tilde{Y}_j)\\
-&{3\sqrt{2}\over \sqrt{k'(k'-1)(2k'+5)}}\sum_{1\le i<j\le k'}\sign(\tilde{X}_i-\tilde{X}_j)\sign(\tilde{Y}_i-\tilde{Y}_j)
\end{align*}

Next, we show that $f$ has bounded difference with respect to $(\tilde{X}_i,\tilde{Y_i})$. Write
\begin{align*}
\Delta_j:=&\sup_{(\tilde{X}_i,\tilde{Y}_i)_{i=1}^{k},(\tilde{X}'_j,\tilde{Y}'_j)}|f((\tilde{X}_i,\tilde{Y}_i)_{1\le i\le k})\\
&-f((\tilde{X}_i,\tilde{Y}_i)_{1\le i\le k,i\ne j},(\tilde{X}'_j,\tilde{Y}'_j))|
\end{align*}
When $1\le j\le k'$,
$$
\Delta_j\le 3\left({1\over \sqrt{k'}}-{k'\over k\sqrt{k}}+{k-k'\over k\sqrt{k}}\right)\le {6(k-k')\over k\sqrt{k'}}
$$
and, when $k'<j\le k$,
$$
\Delta_j\le {3k'\over k\sqrt{k}}.
$$
Because $k\le (1+\epsilon)k'$, we have
$$
v={1\over 4}\sum_j\Delta_j^2\le {9(k-k')^2\over k^2}+{9(k-k')\over 4k}\le 9\epsilon^2+{9\over 4}\epsilon
$$
Applying McDiarmid inequality \citep[see, e.g.,][]{boucheron2013} to $f$,
\begin{align*}
&\PP\left(\tilde{\tau}(t_X,t_Y)-\tilde{\tau}(t'_X,t'_Y)>r|n_{t_X,t_Y}=k,n_{t'_X,t'_Y}=k' \right)\\
\le &\exp\left(-{r^2\over 18\epsilon^2+9\epsilon/2}\right)
\end{align*}
where we used the fact that $\EE(f((\tilde{X}_i,\tilde{Y}_i)_{i=1}^{k}))=0$. By symmetry,
\begin{align*}
&\PP\left(|\tilde{\tau}(t_X,t_Y)-\tilde{\tau}(t'_X,t'_Y)|>r| B \right)\\
\le& 2\PP\left(\tilde{\tau}(t_X,t_Y)-\tilde{\tau}(t'_X,t'_Y)>r| B \right)\\
\le & 2\exp\left(-{r^2\over 18\epsilon^2+9\epsilon/2}\right).
\end{align*}

Next, we consider the case when $t_{X}>t'_{X}$ and $t'_{Y}>t'_{Y}$ and all other remaining cases can be treated in an identical fashion. Applying the result for $(t_{X}-t'_{X})(t'_{Y}-t'_{Y})=0$, we can derive that
\begin{align*}
&\PP\left(|\tilde{\tau}(t_X,t_Y)-\tilde{\tau}(t'_X,t'_Y)|>r| B \right)\\
\le& \PP\left(|\tilde{\tau}(t_X,t_Y)-\tilde{\tau}(t_{X},t'_{Y})|>r/2| B \right)\\
&+\PP\left(|\tilde{\tau}(t_{X},t'_{Y})-\tilde{\tau}(t'_X,t'_Y)|>r/2| B \right)\\
\le & 4\exp\left(-{r^2\over 72\epsilon^2+18\epsilon}\right).
\end{align*}
This completes the proof.
\end{proof}
\vskip 15pt

\begin{proof}[Proof of Lemma \ref{lm:singleconcen}]
Note that
\begin{align*}
&\PP\left(\tilde{\tau}(t_X,t_Y)>r|n_{t_X,t_Y}\ge 2 \right)\\
=&{\sum_{k=2}^{n} \PP\left(\tilde{\tau}(t_X,t_Y)>r|n_{t_X,t_Y}=k \right)\PP(n_{t_X,t_Y}=k) \over \PP(n_{t_X,t_Y}\ge 2)}\\
\le& \sup_{k\ge 2} \PP\left(\tilde{\tau}(t_X,t_Y)>r|n_{t_X,t_Y}=k \right)
\end{align*}
Thus it is sufficient to set an upper bound to $\PP\left(\tilde{\tau}(t_X,t_Y)>r|n_{t_X,t_Y}=k \right)$.
Condition on $n_{t_X,t_Y}=k$, $\tilde{\tau}(t_X,t_Y)$ has the same distribution with
$$
{3\sqrt{2}\over \sqrt{k(k-1)(2k+5)}}\sum_{1\le i<j\le k}\sign(\tilde{X}_i-\tilde{X}_j)\sign(\tilde{Y}_i-\tilde{Y}_j)
$$
where $(\tilde{X}_i,\tilde{Y}_i)_{i=1}^k$ comes from distribution given $X>t_X$ and $Y>t_Y$. Using the concentration inequality for U-statistics from \cite{hoeffding1963}, we get
\begin{align*}
&\PP\left(\tilde{\tau}(t_X,t_Y)-\EE(\tilde{\tau}(t_X,t_Y))|n_{t_X,t_Y}=k)>r|n_{t_X,t_Y}=k \right)\\
&\le e^{-r^2/9}.
\end{align*}
The proof is then completed noting that $\tilde{X}_i$ and $\tilde{Y}_i$ remain independent of each other.
\end{proof}

% you can choose not to have a title for an appendix
% if you want by leaving the argument blank

% Can use something like this to put references on a page
% by themselves when using endfloat and the captionsoff option.
\ifCLASSOPTIONcaptionsoff
  \newpage
\fi

\bibliographystyle{IEEEtran}  
\bibliography{RankCorrelation} 

% that's all folks
\end{document}